\documentclass[11pt,a4paper]{article}

\usepackage[T1]{fontenc}
\usepackage[USenglish]{babel}
\usepackage[a4paper,margin=1in]{geometry}
\usepackage{amsmath}
\usepackage{amssymb}
\usepackage{amsthm}
\usepackage{mathtools}
\usepackage{graphicx}
\usepackage{xcolor}
\usepackage[hidelinks]{hyperref}
\usepackage{authblk}
\usepackage{algorithm}
\usepackage{algpseudocode}
\usepackage{mdframed}
\usepackage{xpatch}
\usepackage{calc}

\makeatletter
\xpatchcmd{\algorithmic}{\itemsep\z@}{\itemsep=1.2pt}{}{}
\makeatother

\algtext*{EndWhile}
\algtext*{EndIf}
\algtext*{EndFor}

\newtheorem{theorem}{Theorem}
\newtheorem{lemma}[theorem]{Lemma}
\newtheorem{corollary}[theorem]{Corollary}

\newcommand{\eps}{\varepsilon}

\newcommand{\congest}{\textnormal{\textsc{Congest}}}
\newcommand{\local}{\textnormal{\textsc{Local}}}
\newcommand{\e}{\mathbb{E}}
\newcommand{\opt}{\operatorname{OPT}}
\newcommand{\probit}{p}
\newcommand{\stagebase}{p}
\newcommand{\iterbase}{q}
\newcommand{\vcore}{\ensuremath{V_{\mathrm{core}}}}

\title{Approximating Minimum Dominating Set with Few Awake Rounds}

\author[1]{Hongyan Ji\thanks{\texttt{hongyan-ji@uiowa.edu}; \url{https://orcid.org/0000-0002-6610-844X}. Supported in part by the National Science Foundation (NSF) grants CCF-2402835 and CCF-2617608.}}
\author[2]{Shreyas Pai\thanks{\texttt{shreyas@cse.iitm.ac.in}; \url{https://orcid.org/0000-0003-2409-7807}.}}
\author[1]{Sriram V. Pemmaraju\thanks{\texttt{sriram-pemmaraju@uiowa.edu}; \url{https://orcid.org/0000-0002-0834-3476}. Supported in part by the National Science Foundation (NSF) grants CCF-2402835 and CCF-2617608.}}
\affil[1]{Department of Computer Science, University of Iowa, Iowa City, IA, USA}
\affil[2]{Department of Computer Science and Engineering, IIT Madras, Chennai, India}
\date{}

\begin{document}
\maketitle

\begin{abstract}
We study the \textit{Minimum Dominating Set (MDS)} problem in the sleeping \congest{} model (Chatterjee, Gmyr, and Pandurangan, PODC 2020), a generalization of the standard \congest{} model, in which a node may sleep in some rounds and can only compute, send messages, or receive messages when it is awake.
The \textit{awake complexity} of an algorithm in this model is the worst case number (over all inputs and all nodes) of rounds a node is awake for during the execution of the algorithm.
While there are several $O(\log \Delta)$-approximation algorithms (in expectation) for MDS that run in $O(\log^2 \Delta)$ rounds, all of these have $\Omega(\log^2 \Delta)$ awake complexity. Whether this awake complexity can be improved is the question that drives our work.
We present the first $O(\log \Delta)$-approximation algorithm for MDS with $o(\log^2 \Delta)$ awake complexity; our algorithm runs in $O(\log^2\Delta)$ rounds with $\tilde{O}(\log\Delta)$ awake complexity.
We can reduce the awake complexity further, but at the cost of approximation: we present, for any $1<\alpha\le\Delta$, an algorithm in the sleeping \congest{} model that computes an
$O(\alpha\log\Delta)$-approximate dominating set in expectation in $\tilde{O}(\log\Delta\cdot \log_{\alpha} \Delta)$ rounds with $\tilde{O}(\log_{\alpha} \Delta)$ awake complexity. 
Our results depend on a generalization of the \congest{} model \textsc{SetCover} algorithm of Grunau, Mitrovi\'c, Rubinfeld, and Vakilian (SODA 2020) that we develop. 
This generalization computes an $O(\stagebase\cdot\iterbase\cdot\log_\stagebase\Delta)$-approximate dominating set in $O(\log_\stagebase\Delta\cdot\log_\iterbase\Delta)$ rounds for parameters $1<\stagebase,\iterbase\le\Delta$. 
Our sleeping \congest{} algorithms apply a variety of techniques including sampling-based estimation and scheduling using virtual binary trees to the aforementioned 2-parameter \textsc{SetCover} algorithm.
\end{abstract}

\noindent\textbf{CCS Concepts:} Theory of computation~Distributed algorithms; Theory of computation~Approximation algorithms analysis

\noindent\textbf{Keywords:} minimum dominating set, sleeping CONGEST, awake complexity, distributed approximation

\section{Introduction}

Given an undirected graph $G=(V,E)$, a \emph{dominating set} of $G$ is a subset $D\subseteq V$ such that every vertex in $V\setminus D$ has a neighbor in $D$.  
The \emph{Minimum Dominating Set (MDS)} problem is to find a dominating set of smallest size.
MDS is a classical NP-complete combinatorial optimization problem and the natural, greedy sequential algorithm provides a $(1+\ln\Delta)$-approximation~\cite{johnson1973approximation,chvatal1979greedy}. 
This algorithm is essentially optimal since, via the standard reduction from \textsc{SetCover} to MDS, there can be no $c\ln n$-approximation to MDS for any constant $c<1$ unless $P=NP$~\cite{dinur2014analytical}. 
In the distributed setting, MDS has been studied extensively in the standard \congest{} and \local{} models. 
In the \congest{} model, Jia, Rajaraman, and Suel~\cite{jia2002efficient} gave a randomized algorithm running in $O(\log n\log \Delta)$ rounds with high probability, providing an $O(\log\Delta)$-approximation (in expectation) and an $O(\log n)$-approximation (with high probability). 
Censor-Hillel and Dory~\cite{censor2018distributed} subsequently strengthened this result to a high-probability $O(\log\Delta)$-approximation in $O(\log n\log\Delta)$ rounds in the \congest{} model. 
Kuhn, Moscibroda, and Wattenhofer~\cite{kuhn2006price} present a similar result via solving
covering-packing linear programs (LPs).
As far as we know, for general graph, no algorithm achieving an $O(\log\Delta)$-approximation in $o(\log^2\Delta)$ rounds is currently known in the \congest{} model. 
However, in the \local{} model (i.e., using messages of unbounded size), Kuhn, Moscibroda, and Wattenhofer~\cite{kuhn2006price,kuhn2016local} show how to compute an expected $O(\log \Delta)$-approximation in $O(\log n)$ rounds, with high probability.

The \local{} and the \congest{} models charge every node for every round of execution, whether or not the node is actively communicating in that round. 
In many real-world distributed systems---sensor networks, ad-hoc wireless networks, and other battery-powered settings---energy, not time, is the scarce resource: an idle but awake node consumes nearly as much energy as one that is transmitting or receiving, while a sleeping node (with its radio off) consumes essentially none. 
This observation motivated a substantial line of work on energy-efficient radio protocols~\cite{chang2017leader,chang2018broadcast}, and was carried over to the message-passing distributed-computing setting by Chatterjee, Gmyr, and Pandurangan~\cite{chatterjee2020sleeping}, who introduced the \emph{sleeping} model: a generalization of the standard \congest{} and \local{} models in which a node may sleep in any round, and only the rounds in which it is awake count toward its individual \emph{awake complexity} (also called \emph{energy complexity}). 
Crucially, the awake complexity of an algorithm can be substantially smaller than its round complexity. 

A growing body of recent work designs algorithms in the sleeping \congest{} and \local{} models with low awake complexity for graph problems that are related to MDS.
For the \textit{maximal independent set} (MIS) problem in the sleeping \congest{} model, Chatterjee, Gmyr, and Pandurangan~\cite{chatterjee2020sleeping} gave a randomized algorithm that computes an MIS with high probability, has expected $O(1)$ node-averaged awake complexity, and with high probability, has $O(\log n)$ worst-case awake complexity and $O(\log^{3.41} n)$ round complexity.
Dufoulon, Moses~Jr., and Pandurangan~\cite{dufoulon2023mis} subsequently improved the 
worst-case awake complexity to $O(\log\log n)$ rounds, though the algorithm uses $O(\log^7 n\log\log n)$ rounds.
Ghaffari and Portmann~\cite{ghaffari2023lowenergy} reduced the round complexity to $O(\log^2 n)$, while maintaining the worst-case awake complexity at $O(\log\log n)$ rounds.
For maximum matching and minimum vertex cover, Ghaffari and Portmann~\cite{ghaffari2022average} presented $O(\log n)$-round algorithms in the sleeping \congest{} model with $O(1)$ node-averaged awake complexity
for $(1+\varepsilon)$-approximate maximum matching and $(2+\varepsilon)$-approximate minimum vertex cover, for every constant $\varepsilon>0$. 

O-LOCAL problems refer to the class of problems that can be solved using any acyclic orientation of the
edges, by choosing a solution for each vertex after all vertices reachable from it have
computed their solution, and as a function of these solutions.
Problems such as MIS and $(\Delta+1)$-coloring are O-LOCAL problems.
For this class of problems, Barenboim and Maimon~\cite{barenboim2021deterministic} gave a deterministic algorithm using $O(\log\Delta+\log^* n)$ awake rounds and $O(\Delta^2+\log^* n)$ rounds in the sleeping \local{} model.
Balliu et al.~\cite{balliu2025sequential} subsequently showed that every O-LOCAL problem can be solved deterministically in the sleeping \local{} model using $O(\sqrt{\log n}\log^* n)$ awake rounds and $O(n^5\sqrt{\log n})$ rounds.
MDS is not O-LOCAL and so this general result does not have any direct implications for MDS.

For MDS, prior sleeping-model work has been confined to special graph classes arising in wireless networks~\cite{lam2025minimum} (see Section \ref{subsec:independentResults} for an update). 
As mentioned earlier, the state of the art for MDS in the \congest{} model is an $O(\log \Delta)$-approximation algorithm running in $O(\log^2 \Delta)$ rounds.
No faster algorithms are known to provide this approximation guarantee.
Given this situation, our work is motivated by the following question.
\begin{mdframed}[linewidth=1pt,linecolor=black,backgroundcolor=gray!10,roundcorner=5pt]
    \textbf{Core Question:} Can we design an $O(\log \Delta)$-approximation for MDS with $o(\log^2 \Delta)$ awake complexity?
\end{mdframed}

A natural generalization of this question is whether it is possible to design algorithms that exhibit a tradeoff between approximation quality and awake complexity, specifically allowing the approximation guarantee to worsen in exchange for further improvement in the awake complexity.
Our work in this paper is motivated by this ``approximation versus energy'' tradeoff question as well.

\subsection{The Sleeping Model}
\label{subsec:sleepingCongest}

We work in the synchronous sleeping model of Chatterjee, Gmyr, and Pandurangan~\cite{chatterjee2020sleeping}, which generalizes the standard \congest{} model by allowing each node to sleep in any round. 
The algorithm proceeds in synchronous rounds, and in each round every node is either awake or asleep. An awake node performs local computation and may send messages, of size $O(\log n)$ bits, to each of its neighbors. 
A sleeping node performs no computation, sends no messages, and \textit{receives no messages}. Thus any message sent to a sleeping node is lost. A sleeping node wakes only at a round it previously scheduled for its wake up; it cannot be woken by a neighbor. 

We let $\Delta$ denote the maximum closed-neighborhood size (equivalently, $1$ plus the maximum degree) and assume that $\Delta \ge 2$ (otherwise the MDS problem becomes trivial).
We assume that each node initially knows $n$ and $\Delta$, and that on waking up a node knows the current round number, presumably by accessing the global clock. 
A node $v$ that is awake in a round $i$, schedules its next awake round $j > i$, before completing round $i$. 
Node $v$ then sleeps during rounds $i+1, i+2, \ldots, j-1$ and wakes up in round $j$ with no help from any other nodes. 
The \emph{round complexity} of an algorithm is the number of rounds until every node terminates. The \emph{awake complexity} of an algorithm is $\max_{v\in V} A_v$, where $A_v$ is the number of rounds in which $v$ is awake.

\subsection{Our results}

We present two main results on MDS approximation in the sleeping \congest{} model. Our first result, affirmatively answers our core question. 

\begin{theorem}[Informal]\label{thm:homestage-mds} 
There is an algorithm in the sleeping \congest{} model that computes an $O(\log\Delta)$-approximate dominating set in expectation in $O(\log^2\Delta)$ rounds and $\tilde{O}(\log \Delta)$\footnote{All round complexity and awake complexity measures in this paper are in $O(\mbox{poly}(\log \Delta))$. 
We use the $\tilde{O}(\cdot)$ notation for these measures in this paper to hide $O(\mbox{poly}(\log \log n))$ terms.} awake complexity.
\end{theorem}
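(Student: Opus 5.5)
We build on the two-parameter generalization of the Grunau--Mitrovi\'c--Rubinfeld--Vakilian \textsc{SetCover} algorithm described in the abstract. Our goal is an algorithm that computes an $O(\stagebase\cdot\iterbase\cdot\log_\stagebase\Delta)$-approximate dominating set in $O(\log_\stagebase\Delta\cdot\log_\iterbase\Delta)$ rounds, and we instantiate it with $\stagebase=\iterbase=2$. This gives an $O(\log\Delta)$-approximation in $O(\log^2\Delta)$ rounds, so the approximation ratio and round bound of Theorem~\ref{thm:homestage-mds} follow once that generalization is established. The generalization itself proceeds by stages and iterations. In stage $s$, nodes whose residual degree (number of undominated vertices in the closed neighborhood) is at least $\Delta/\stagebase^{s}$ are candidates. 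Within a stage, iteration $t$ lets each candidate join the solution with probability about $\iterbase^{t}/\Delta$. An undominated vertex pays its share of the cost, and a standard charging argument in expectation bounds the number of joins per stage against $\opt$ times $O(\stagebase\iterbase)$.

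The real content is bringing the awake complexity down from $\Theta(\log^2\Delta)$ to $\tilde{O}(\log\Delta)$. The key structural observation is that each vertex matters in only a few (stage, iteration) pairs.
\begin{itemize}
\item A candidate $v$ has a \emph{home stage}, the stage in which its residual degree falls into $[\Delta/\stagebase^{s},\Delta/\stagebase^{s-1})$.
\item Its residual degree only decreases, so $v$ is a candidate in at most one stage before dropping to the next.
\item We arrange that $v$ participates in its home stage (and possibly the next one).
\item An undominated vertex $u$ needs to know when it becomes dominated. It only needs to be awake when some neighbor joins.
\end{itemize}

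The plan is as follows.

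First, we make each node compute its home stage without being awake in every earlier stage. It does this by estimating its residual degree through sampling. Waking once per stage gives only $O(\log\Delta)$ awake rounds in total, and in each of those rounds the node learns how many neighbors are still undominated. Because only counts (or sampled counts) are needed, $O(\log n)$-bit messages suffice.

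Second, within its home stage a candidate must stay awake for all $\log_\iterbase\Delta$ iterations, costing $O(\log\Delta)$ per node. Since this happens for just one stage, the total remains $O(\log\Delta)$.

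Third, an undominated vertex $u$ cannot stay awake throughout all $\Theta(\log^2\Delta)$ rounds just to watch its neighbors. We handle this with a virtual binary tree over the rounds of a stage. Along root-to-leaf paths we aggregate the information ``some neighbor joined by time $t$'', and each node wakes only at the $O(\log\log\Delta)$ tree nodes on its path. Summed over stages, this costs $O(\log\Delta\cdot\log\log\Delta)$.

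The step we expect to be the main obstacle is reconciling the round-by-round randomized joining rule with the sleep schedule. A candidate's join probability in each iteration depends on the current residual degrees of its neighbors, and those are fresh only if the neighbors were awake to receive updates. We would first try freezing the stage's candidate set at the start of the stage. Within a stage, coverage would then be propagated lazily through the binary-tree schedule. We would re-prove the GMRV charging lemma under the weaker guarantee that degrees are known only up to a factor of $\stagebase$ (and up to sampling error $O(\log\log n)$ factors). If stale estimates break the charging argument, the fallback is to let candidates re-estimate degrees with one extra awake round per iteration via sampling. That still keeps each node's iteration participation confined to $O(1)$ stages, so the total awake complexity remains $\tilde{O}(\log\Delta)$.
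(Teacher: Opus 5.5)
Your proposal has two genuine gaps, both in the awake-complexity accounting.

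\textbf{First gap: a node can be a candidate in many stages.} Your second step charges a candidate all $\lceil\log\Delta\rceil$ iterations of its ``home stage'', and argues this happens only once because residual degrees only decrease. That does not follow, because the thresholds $T_i=\Delta/2^i$ also halve every stage. Suppose a stage-$i$ candidate's residual degree drops below $T_i$ before its coin succeeds. It does not join, enters stage $i+1$ with residual degree in $[T_{i+1},T_i)$, and is a candidate again. This can repeat for $\Theta(\log\Delta)$ stages; the paper flags exactly this obstacle. So staying awake through every iteration of each candidate stage, or re-estimating once per iteration as in your fallback, costs $\Theta(\log^2\Delta)$ awake rounds in the worst case. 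Capping each node at $O(1)$ candidate stages by fiat does not help either: it destroys the invariant $|U_i|\le \stagebase T_i\cdot\opt$ that the charging argument needs.

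The missing idea is first-success sampling. At the start of stage $i$, a candidate $v$ tosses all of its coins up front (coin $k$ succeeds with probability $\probit_k=\min\{1,2^k/\Delta\}$). It lets $J$ be the first success, sleeps until iteration $J$, and joins iff $d_i^J(v)\ge T_i$. Since $d_i^k(v)$ is nonincreasing in $k$ within a stage, ineligibility at $J$ persists for the rest of the stage. Under the obvious coupling this reproduces exactly the joins of \textsc{BaseMDS}. Add two all-awake status-exchange rounds per stage, which give every node its exact $d_i^1(v)$, so no sampling is needed there. Then each stage costs $O(1)$ awake rounds, and the $O(\log\log\Delta)$ binary-tree rebroadcast cost is paid only in the single stage in which the node joins.

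\textbf{Second gap: learning the current residual degree while neighbors sleep.} You flag this but leave it open: how does a candidate awake only at iteration $J$ learn $d_i^J(v)$? Freezing eligibility at the start of the stage does break the charging. Degrees are also not known ``up to a factor of $\stagebase$'' within a stage, since they can collapse mid-stage. For example, in $K_{m,T}$ with $m\le T$, all $m$ small-side vertices are stage-$1$ candidates. With frozen eligibility all of them join, because the last coin has probability $1$, whereas $\opt\le 2$. (With dynamic eligibility, once one of them joins, the rest have residual degree $1$ and stop being eligible.) Your root-to-leaf aggregation tells a vertex its own status at one chosen time. It does not make that vertex available at the different iterations at which its various candidate neighbors query it.

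The paper resolves this with \emph{estimators}. Each node acts as an estimator in a single uniformly random stage--iteration pair of the entire run. The awake estimators in $N^+(v)$ at $(i,J)$ then form a uniform random sample, and $v$ uses $\widehat d(v)=\frac{Y}{X}d_i^1(v)$. The virtual binary trees serve only to make a node that joined at iteration $J'$ meet, in between, any neighboring estimator scheduled at $K\ge J'$, so the estimator knows whether it is dominated. The concentration argument needs $T_i> C\lceil\log\Delta\rceil^2\log n$. This forces the last $O(\log\log n)$ stages to run fully awake, which is exactly the source of the $\log\log n$ factor in the $O(\log\Delta\log\log n)$ awake bound.

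The approximation survives for two reasons. With high probability the estimate decides eligibility correctly whenever $d_i^k(v)\notin((1-\gamma)T_i,(1+\gamma)T_i)$, and the charging argument tolerates such a rule at a cost of a factor $(1+\gamma)/(1-\gamma)<3$. The failure event contributes only $o(1)$ to $\e[|D|]$, since $|D|\le n$.
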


We obtain this result by designing a sleeping version of the \textsc{SetCover} algorithm of Grunau, Mitrovi\'c, Rubinfeld, and Vakilian~\cite{grunau2020improved}. 
For MDS, this algorithm yields an $O(\log \Delta)$-approximation (in expectation) in $O(\log^2 \Delta)$ rounds. 
We note that even though in our sleeping version, nodes spend large portions of the algorithm asleep, this does not result in the slowing down of the algorithm, which still runs in $O(\log^2 \Delta)$ rounds.
As will become clear from the precise version of this theorem (Theorem~\ref{thm:base-awake-formal} in Section~\ref{sec:base-awake}), for moderately high values of $\Delta$, i.e., $\Delta \ge 2^{\log^\eps n}$ for any constant $\eps \in (0, 1)$, the awake complexity bound simplifies to $O(\log \Delta)$, losing the multiplicative $O(\mbox{poly}(\log \log n))$ terms. 
This also means that $O(\log n)$ is a simple upper bound on the awake complexity of the algorithm promised by Theorem~\ref{thm:homestage-mds}.

Our second result establishes a smooth tradeoff between the approximation factor and awake complexity of MDS algorithms, specifically showing there are MDS approximation algorithms with $o(\log \Delta)$ awake complexity, provided we permit weaker approximation guarantees.

\begin{theorem}[Informal]\label{thm:homestage-mds-tradeoff}
For any $1<\alpha\le\Delta$, there is an algorithm in the sleeping \congest{} model that computes an $O(\alpha\log\Delta)$-approximate dominating set in expectation in $\tilde{O}(\log\Delta\cdot \log_{\alpha} \Delta)$ rounds and $\tilde{O}(\log_{\alpha} \Delta)$ awake complexity.
\end{theorem}

This result (see Theorem \ref{thm:homestage-mds-tradeoff-formal} for the precise version of this theorem) allows us to obtain sublogarithmic awake complexities, such as $\tilde{O}(1)$, for $\alpha = \Delta^{\eps}$ and $\tilde{O}(\log^{1-\eps} \Delta)$, for $\alpha = 2^{\log^{\eps} \Delta}$,
for constant $\eps \in (0, 1)$. 
We use a variety of techniques to obtain our results---see Section \ref{subsec:challenges} below for an overview. 
One new result that we obtain, that may be of independent interest, is the following natural, 2-parameter \congest{} model generalization of the \textsc{SetCover} algorithm of Grunau, Mitrovi\'c, Rubinfeld, and Vakilian~\cite{grunau2020improved}.

\begin{theorem}\label{thm:pq-basemds}
For every $1<\stagebase,\iterbase\le\Delta$, there is a randomized \congest{} algorithm that returns an $O(\stagebase\cdot \iterbase\cdot \log_\stagebase\Delta)$-approximate dominating set in expectation, in $O(\log_\stagebase\Delta\cdot\log_\iterbase\Delta)$ rounds.
\end{theorem}

By setting $\stagebase=\iterbase=2$ we recover the $O(\log\Delta)$-approximation in $O(\log^2 \Delta)$ rounds due to Grunau, Mitrovi\'c, Rubinfeld, and Vakilian~\cite{grunau2020improved}.
We use Theorem \ref{thm:pq-basemds}, with different settings of $\stagebase$ and $\iterbase$ to obtain both Theorems \ref{thm:homestage-mds} and \ref{thm:homestage-mds-tradeoff}.

\subsection{Independent results}
\label{subsec:independentResults}

While our paper was being reviewed, the paper
of Ghaffari and Yeoh~\cite{ghaffari2026energy} was published in SPAA 2026. 
This paper gives a $\widetilde O(\log^2 n)$-approximation \congest{} algorithm with high probability for minimum set cover in $O(\log^2 n)$ rounds, with $O(\log\log n)$ awake complexity for element nodes and $O(\log n\log\log n)$ awake complexity for set nodes.
Since MDS is a special case of set cover, their result also implies an $\widetilde O(\log^2 n)$-approximate MDS algorithm with $O(\log n\log\log n)$ awake complexity.
Interestingly, the starting point of the Ghaffari and Yeoh~\cite{ghaffari2026energy} energy-efficient algorithm is the same, \textsc{BaseMDS} algorithm (Algorithm \ref{alg:base-mds}) \cite{grunau2020improved} that we use.
The paper also presents improved results on the worst case awake complexity for maximum matching and minimum vertex cover approximation.

\subsection{Challenges and ideas}
\label{subsec:challenges}

\textbf{The \textsc{BaseMDS} algorithm.}
Our algorithmic starting point is a parallel emulation of the classical sequential greedy MDS algorithm, which repeatedly picks the vertex $v$ whose closed neighborhood $N^{+}(v)$ ($= \{v\} \cup N(v)$) contains the largest number of still-undominated vertices and adds $v$ to the dominating set $D$.  
This achieves a $(1+\ln\Delta)$-approximation~\cite{johnson1973approximation,chvatal1979greedy}.  A direct distributed parallelization is hopeless: the algorithm seems inherently sequential and identifying the global maximum-residual-degree vertex would by itself cost $\Omega(\operatorname{diam}(G))$ rounds.  
Here ``residual degree'' of a node $v$ refers to the number of as-yet-undominated neighbors of node $v$ (including $v$ if it is not dominated). 
The standard relaxation, going back to early parallel \textsc{SetCover} algorithms~\cite{berger1994nc,rajagopalan1998primal} and also used in the algorithm of Grunau, Mitrovi\'c, Rubinfeld, and Vakilian~\cite{grunau2020improved} that we build on, is to replace ``the vertex of maximum residual degree'' by ``all vertices whose residual degree is high enough.''  
Concretely, we identify geometrically decreasing residual degree thresholds $T_i=\Delta/2^i$, $1 \le i \le \lceil \log \Delta \rceil$ and let the algorithm proceed in $\lceil\log\Delta\rceil$ \emph{stages}.
At the start of stage $i$, any vertex whose residual degree is at least $T_i$ is considered a \textit{candidate} for Stage $i$.

All stage $i$ candidates would like to join $D$ in stage $i$, but a stage may have many candidates whose closed neighborhoods overlap, and adding all of them simultaneously could lead to a very poor approximation.
So we need to run some type of \textit{symmetry breaking} algorithm in each stage to prune the candidates in that stage.
The symmetry breaking process needs to ensure that the selected Stage $i$ candidates satisfy two properties: 
(i) every as-yet-undominated node covered by some Stage $i$ candidate is covered by some \textit{selected} stage $i$ candidate and 
(ii) no as-yet-undominated node is ``over covered'' by the selected stage $i$ candidates. 
Several approaches have been proposed for symmetry breaking in this context (e.g., randomized pruning based on median support \cite{jia2002efficient,PanditPemmaraju2010}, random voting based scheme in \cite{censor2018distributed}). 
The symmetry breaking used in the algorithm of Grunau, Mitrovi\'c, Rubinfeld, and Vakilian~\cite{grunau2020improved} is particularly simple.
In each stage, we simply run $\lceil\log\Delta\rceil$ \emph{iterations} and in iteration $k$, every still-eligible candidate joins $D$ independently with probability $\probit_k=\min\{1,2^k/\Delta\}$.  
The fact that the probabilities start off very small, at $2/\Delta$, and grow geometrically in each iteration ensures that each as-yet-undominated node is dominated by only $O(1)$ selected candidates in expectation.  
This is a key result in the analysis of~\cite{grunau2020improved}. 
We refer to this algorithm from \cite{grunau2020improved} as \textsc{BaseMDS}. Algorithm~\ref{alg:base-mds} presents the pseudocode.

\begin{algorithm}[t]
\caption{\textsc{BaseMDS}}
\label{alg:base-mds}
\begin{algorithmic}[1]
\Require maximum closed-neighborhood size $\Delta$ known to all nodes.
\State $D\gets\emptyset$.
\For{$i=1,2,\dots,\lceil\log \Delta\rceil$} \Comment{stages}
\For{$k=1,2,\dots,\lceil\log \Delta\rceil$} \Comment{iterations}
    \For{each node $v$ in parallel}
      \State $d_i^k(v)\gets|\{u\in N^{+}(v):u\notin N^{+}(D)\}|$.
      \If{$v\notin D$ and $d_i^k(v)\ge T_i$} \Comment{$T_i = \Delta/2^i$}
        \State add $v$ to $D$ with probability $\probit_k = \min\{1,2^k/\Delta\}$.
      \EndIf
    \EndFor
  \EndFor
\EndFor
\State \Return $D$.
\end{algorithmic}
\end{algorithm}

The \textsc{BaseMDS} algorithm computes an $O(\log\Delta)$-approximate dominating set in expectation in $O(\log^2\Delta)$ \congest{} rounds~\cite{grunau2020improved}. 
We prove our main results described earlier by designing an energy efficient version of \textsc{BaseMDS} (Algorithm~\ref{alg:base-mds}) (and its generalization Algorithm~\ref{alg:pq-mds}). In doing this, we encountered some significant challenges that we discuss below.

\medskip
\noindent
\textbf{Challenge 1: Nodes continually need to monitor residual degrees.}
The most basic challenge in designing a sleeping version of \textsc{BaseMDS} is that nodes need to continually monitor their dynamically decreasing residual degrees so that they can attempt to join $D$ when their residual degree exceeds a threshold $T_i$.
While the thresholds $T_i$ are fixed ahead of time, it is hard to anticipate how the residual degrees will fall from one iteration to the next.
Specifically, each node $v$ in \textsc{BaseMDS} needs to monitor its residual degree at the start of each stage to determine if it is a candidate for that stage.
If a node $v$ is a stage $i$ candidate, it needs to continually monitor its residual degree during the iterations in stage $i$ in order to participate in symmetry breaking.
It is not difficult to construct worst-case examples in which a node becomes a candidate in most stages and therefore has to monitor its degree for $\Theta(\log^2 \Delta)$ iterations over the entire algorithm.
Our first conceptual contribution is showing that \textit{if a node $v$ is a candidate in stage $i$, it only needs to wake up in $1$ iteration during the stage to participate in symmetry breaking}. 
Furthermore, it can identify its wake-up iteration at the start of the stage in order to schedule its wake-up. 

\medskip
\noindent
\textbf{Challenge 2: Nodes continually need to share their status with neighbors.}
Overcoming Challenge 1 immediately implies that each node needs to be awake as a candidate for $O(\log \Delta)$ rounds, i.e., $O(1)$ rounds per stage. 
However, whenever a node wakes up to monitor its residual degree, its neighbors also need to be awake in order to provide up-to-date information on their status (dominated or not). This means that nodes may have to be continually awake to respond to queries from neighbors about their status.
Our second conceptual contribution deals with this challenge. To each node $u$, we assign a single randomly
chosen stage-iteration pair over the entire algorithm, and during this pair, node $u$ acts as an \textit{estimator}, providing status updates to any neighbor querying it. 
This essentially means that for any node $v$ that needs to compute its residual degree in an iteration, the estimators in $v$'s neighborhood that are awake during that iteration form a random neighborhood sample. 
We show in Lemma~\ref{lem:est-decision-base} that as long as the threshold $T_i = \Omega(\text{poly}(\log n))$, this energy-saving sampling scheme is enough for node $v$ to obtain an accurate estimate of its residual degree.

\medskip
\noindent
\textbf{Challenge 3: Coordinating dominators and estimators.}
Our next challenge arises from the fact that nodes that wake up in just one iteration, to act as estimators, will have no clue about their status (dominated or not). Specifically, they would not have heard from any neighbors who joined $D$ in a prior iteration because they were not awake at that time.
So we need to somehow coordinate the wake up schedule of a node $w$ that wakes up as an estimator in iteration $j$ and the wake up schedule of a neighbor $w' \in N(w)$ that joins $D$ in an earlier iteration $i < j$.
Schedules obtained by using a \textit{virtual binary tree (VBT)} \cite{ghaffari2023lowenergy,barenboim2021deterministic,augustine2024mst} have been proposed to solve precisely this coordination challenge.
Continuing our example, let $\mathcal{S} = \{1, 2, \ldots, \lceil \log \Delta \rceil\}$ denote the iteration numbers in a stage of \textsc{BaseMDS}.
The node $w$, that is scheduled to wake up in iteration $j$ uses a VBT to compute a set $S_j \subseteq \mathcal{S}$ of iterations and similarly the node $w' \in N(w)$ that joins $D$ in iteration $i$ uses a VBT to compute a set $S_i \subseteq \mathcal{S}$ of iterations.
The VBT-based schedule has the property that (i) the sets are small, i.e., $|S_i|, |S_j| = O(\log |\mathcal{S}|)$ and (ii) there is an iteration $\ell \in S_i \cap S_j$ such that $i \le \ell \le j$.
See Lemma~\ref{lem:vbt-base} for a precise statement of the VBT schedule property.
This implies if $w$ is awake for all iterations in $\{j\} \cup S_j$ and $w'$ is awake for all iterations in $\{i\} \cup S_i$ then there is a common iteration $\ell$, $i \le \ell \le j$, during which both $w$ and $w'$ are awake and during this iteration $w$ can learn about $w'$ and can update its status so that it can correctly answer status queries in iteration $j$.
The fact that the sizes $|S_i|, |S_j| \in O(\log \log \Delta)$ implies that all of this can be achieved with only an $O(\log\log \Delta)$ overhead in the awake complexity of nodes.

\medskip
\noindent
\textbf{Challenge 4: Reducing awake complexity to $o(\log \Delta)$.}
Given that each node wakes up a constant number of times per stage to monitor its degree and there are $\Theta(\log \Delta)$ stages, how do we reduce the awake complexity of the algorithm to $o(\log \Delta)$?
Our approach to doing this is by reducing the number of stages in the algorithm.
Specifically, we generalize the \textsc{BaseMDS} algorithm by allowing the degree thresholds to fall by a factor $\stagebase > 1$ in each stage, rather than by a factor 2.
Using a parameterized degree threshold sequence $T_i = \Delta/\stagebase^i$ implies $\Theta(\log_\stagebase \Delta)$ stages and using values of $\stagebase > 2$ reduces the number of stages, while incurring a corresponding loss in approximation factor.
In fact, we also parameterize the sampling probability sequence that define iterations, using $\min\{1, \frac{\iterbase^k}{\Delta}\}$, instead of $\min\{1, \frac{2^k}{\Delta}\}$, for a parameter $\iterbase > 1$. 
The resulting algorithm is a natural 2-parameter \congest{} model generalization of \textsc{BaseMDS} that we call $(p, q)$-MDS. 
We show, by generalizing the analysis of Grunau, Mitrovi\'c, Rubinfeld, and Vakilian~\cite{grunau2020improved}, that this algorithm yields a $O(\stagebase \cdot \iterbase \cdot \log_\stagebase \Delta)$-approximation to MDS (in expectation), while running in $O(\log_\stagebase \Delta \cdot \log_\iterbase \Delta)$ rounds.
The sleeping implementation of this generalization underlies
Theorems~\ref{thm:homestage-mds} and
\ref{thm:homestage-mds-tradeoff}.
Interestingly, we use it to obtain slight improvements in Theorem \ref{thm:homestage-mds} as well.

There have been several papers \cite{kuhn2003constant,kuhn2006price,PanditPemmaraju2010} on approximation versus rounds tradeoff for MDS in the \congest{} model.
For example, Kuhn and Wattenhofer~\cite{kuhn2003constant} gave an $O(k\Delta^{2/k}\log\Delta)$-approximation in $O(k^2)$ rounds for each integer $k\ge 1$.
Our $(p,q)$-MDS algorithm gives a tradeoff of the same general form: setting $p=q=\Delta^{1/k}$ in Theorem~\ref{thm:pq-basemds} gives an $O(k\Delta^{2/k})$-approximation in expectation in $O(k^2)$ rounds. We do not use $(p,q)$-MDS to improve the best known round--approximation tradeoff. Rather, its staged threshold structure is particularly amenable to our sleeping implementation, allowing the same parameters to control the awake complexity as well.

\subsection{Other related work}

Global problems such as \textit{Minimum Spanning Tree} (MST) and \textit{Single Source Shortest Path} (SSSP) have also been studied in distributed settings from the energy perspective.
For MST in the sleeping \congest{} model, Augustine, Moses~Jr., and Pandurangan~\cite{augustine2024mst} proved that the optimal awake complexity is $\Theta(\log n)$. They gave a randomized algorithm that, with high probability, computes an MST using $O(\log n)$ awake rounds and $O(n\log n)$ rounds, as well as a deterministic algorithm using $O(\log n)$ awake rounds and $O(n\log^5 n)$ rounds.
For SSSP in the sleeping \congest{} model, Ghaffari and Trygub~\cite{GhaffariTrygubPODC2024} present a deterministic algorithm that runs in $\widetilde{O}(n)$ rounds and each node is awake during only $\text{poly}(\log n)$ rounds.

\section{Awake-efficient \textsc{BaseMDS}}
\label{sec:base-awake}

We present a sleeping \congest{} version of \textsc{BaseMDS} (Algorithm~\ref{alg:base-mds})~\cite{grunau2020improved} that computes an $O(\log\Delta)$-approximate dominating set in expectation in $O(\log^2\Delta)$ rounds, with each node awake in $O(\log\Delta\log\log n)$ rounds. 
The stated round and awake-complexity bounds hold deterministically.
We use the same thresholds $T_i=\Delta/2^i$ and probabilities $\probit_k=\min\{1,2^k/\Delta\}$ as in \textsc{BaseMDS}. 
Also recall that $d_i^k(v)$ denotes the number of undominated nodes in the closed neighborhood of $v$ just before iteration $k$ of stage $i$, with respect to the current dominating set $D$, i.e., 
$d_i^k(v) =  |\{u\in N^{+}(v):u\notin N^{+}(D)\}|$. 
We will refer to $d_i^k(v)$ as the \textit{residual degree} of $v$ during iteration $k$ of stage $i$.
We note that the algorithm only uses $1$-bit messages.

The algorithm has two phases. Let $i^\star = \min\{i:T_i\le C\lceil\log\Delta\rceil^2\log n\}$, where $C$ is a constant that will be fixed later (in Lemma~\ref{lem:est-decision-base}). 
Phase~1 runs stages $1,\dots,i^\star-1$. During Phase~1 most nodes sleep most of the time, and a candidate decides whether to join $D$ from a sampling-based estimate of its residual degree. 
Phase~2 runs the remaining $O(\log\log n)$ stages exactly as in \textsc{BaseMDS}, with every node awake in every round. We split the algorithm in this manner because we need 
sufficiently many samples for a node to obtain an accurate estimate of its residual degree (Lemma~\ref{lem:est-decision-base}).
Before Phase~1 starts, each node $u$ picks an estimator assignment $(I_u,K_u)\in [\lceil\log\Delta\rceil] \times [\lceil\log\Delta\rceil]$ uniformly at random. If $I_u < i^\star$, node $u$ wakes up in iteration $K_u$ of stage $I_u$ and acts as an \emph{estimator}, replying to queries from its neighbors.
Thus each node acts as an estimator in at most one iteration during Phase~1.

We now describe Phase~1 of our algorithm.
Each stage $i$, $1 \le i < i^\star$, begins with two status-exchange rounds in which every node is awake. In the first of these rounds, every $v\in D$ broadcasts this fact, and any node hearing such a broadcast from a neighbor records that it is dominated. 
In the second round, every undominated node broadcasts this status. Using this information, each node $v$ then computes $d_i^1(v)$, its residual degree during iteration 1 of stage $i$. 
A node $v$ with $d_i^1(v)\ge T_i$ becomes a \emph{candidate} for stage $i$.
Each candidate $v$ then tosses a sequence of 
$\lceil\log\Delta\rceil$ biased coins, with coin $k$ in the sequence having success probability $\probit_k$.
Let $J_{v,i} \in [\lceil\log\Delta\rceil]$ denote the minimum index $k$ such that the outcome of coin $k$ is a success.
Since $p_{\lceil\log\Delta\rceil}$ is 1, $J_{v,i}$ is well-defined.
As a candidate, $v$ only needs to wake in iteration $J_{v,i}$ to decide whether to join $D$. 

The stage is then run for $\lceil\log\Delta\rceil$ iterations, each with two rounds. 
In round 1 of iteration $k$, 
every awake node $w\in D$ that joined $D$ earlier in stage $i$ informs all neighbors
of this fact  
and any awake node hearing from a neighbor in $D$ marks itself dominated. 
In round 2 of iteration $k$, every node $u$ with $(I_u,K_u)=(i,k)$ that was undominated at the start of stage $i$ sends its status (i.e., dominated or undominated) to all neighbors. 
Note that such a node $u$ has picked iteration $k$ in stage $i$ to serve as an estimator. 
Let $X$ be the number of nodes in $N^{+}(v)$ that were undominated at the start of stage $i$ and replied to $v$ in this round (including $v$ itself if $(I_v,K_v)=(i,J_{v,i})$ and $v$ was undominated at the start of stage $i$), and let $Y$ be the number of these that are still undominated. The candidate $v$ then computes
\[
   \widehat d(v)=\frac{Y}{X}\,d_i^1(v)\quad\text{if }X>0,\qquad\widehat d(v)=0\quad\text{if }X=0.
\]
as an estimator of its residual degree in stage $i$, iteration $k$.
If $\widehat d(v)\ge T_i$, then $v$ joins $D$.

The above description ignores Challenge 3 from the ``Introduction'' which asks how nodes that wake up
in iteration $k$ in stage $i$ as estimators, can know their status accurately. The next subsection describes the virtual binary tree idea \cite{barenboim2021deterministic,augustine2024mst,ghaffari2023lowenergy} as it applies to our algorithm. This results in each node staying awake for an additional $O(\log \log \Delta)$ rounds, so that estimators know their status accurately.

\subsection{Wake schedule via virtual binary trees}
\label{subsec:base-vbt}

We want to ensure that candidates that join $D$ in stage $i$ inform neighbors that wake up later as estimators in stage $i$.
We use the \textit{virtual binary tree} idea described in previous work \cite{barenboim2021deterministic,augustine2024mst,ghaffari2023lowenergy}, specifically the formulation of Ghaffari and Portmann~\cite{ghaffari2023lowenergy}, restated below.

\begin{lemma}[\cite{ghaffari2023lowenergy}, Lemma~2.5]\label{lem:vbt-base}
For any positive integer $T$, there exist sets $S_1,\dots,S_T\subseteq[T]$ with $|S_k|\le\lceil\log T\rceil$ such that for every $i,j\in[T]$ with $i\le j$ there exists $\ell\in S_i\cap S_j$ satisfying $i\le\ell\le j$.
\end{lemma}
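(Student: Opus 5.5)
The plan is to realize the sets $S_1,\dots,S_T$ as root-to-node paths in a balanced binary search tree on the keys $[T]$, which is the "virtual binary tree" the name refers to. Concretely, I would define the tree recursively. The interval $[a,b]$ with $a\le b$ gets root $m=\lfloor (a+b)/2\rfloor$, left subtree built on $[a,m-1]$, and right subtree built on $[m+1,b]$; an empty interval gives an empty tree. Every $k\in[T]$ is the root of exactly one subinterval encountered in this recursion, so it occurs as exactly one node. I then set $S_k$ to be the set of keys on the path from the root of the tree on $[1,T]$ down to the node $k$, with both endpoints included. Equivalently, $S_k$ is the set of midpoints probed by binary search for $k$.

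The intersection property is a lowest-common-ancestor argument. Fix $i\le j$. If $i=j$, take $\ell=i$, which lies in $S_i$ by construction. Otherwise, follow the two search paths from the root. While the current node $m$ satisfies $m<i$ or $m>j$, both $i$ and $j$ lie on the same side of $m$, so the two searches move together into the same subtree. Let $m$ be the first node on the common path with $i\le m\le j$. Such a node exists because $i$ itself lies on the path to $i$ and satisfies the condition, and because before reaching $m$ the paths to $i$ and to $j$ coincide. Hence $m\in S_i\cap S_j$, and $\ell=m$ works. Note that it is exactly the BST ordering (left subtree keys $<m<$ right subtree keys) that makes the split point lie between $i$ and $j$.

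The size bound is the delicate step. $|S_k|$ is the depth of node $k$, counting nodes. The halving recursion gives subintervals of length at most $\lfloor L/2\rfloor$ from an interval of length $L$, so by induction the depth is at most $\lfloor\log T\rfloor+1$. This matches $\lceil\log T\rceil$ except when $T$ is a power of two, where it exceeds it by one. This is the main obstacle, and I would first check whether it is real. For $T=2$, the pair $i=j=2$ forces $2\in S_2$, the pair $i=j=1$ forces $1\in S_1$, and the pair $(1,2)$ forces a common element; so some set must have size $2>\lceil\log 2\rceil$. The bound as literally stated therefore needs either the convention $|S_k|\le\lceil\log(T+1)\rceil=\lfloor\log T\rfloor+1$, or a reading up to an additive constant. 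I would prove it in the form $|S_k|\le\lfloor\log T\rfloor+1\le\lceil\log T\rceil+1$ and remark that only the bound $O(\log T)$ is used downstream, in the $O(\log\log\Delta)$ awake overhead with $T=\lceil\log\Delta\rceil$.
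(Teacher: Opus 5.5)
Your proof is correct and is the standard virtual-binary-tree (binary-search-path) construction behind the cited Ghaffari--Portmann lemma, which the paper invokes without proof; both the lowest-common-ancestor argument and the depth bound $|S_k|\le\lfloor\log T\rfloor+1$ are right. Your off-by-one objection is also valid: allowing $i=j$ forces $k\in S_k$, so the bound $\lceil\log T\rceil$ already fails at $T=1,2$. By contrast, $\lceil\log T\rceil$ is achievable for the strict version $i<j$: take a full binary tree of depth $\lceil\log T\rceil$ with leaves $1,\dots,T$ in order, and label each internal node by the largest leaf of its left subtree. That strict version is what the paper's later step of replacing $S_k$ by $S_k\cup\{k\}$ seems to presuppose, and it yields $\lceil\log T\rceil+1\ge\lfloor\log T\rfloor+1$, so your version costs nothing downstream.
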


We use this lemma as follows. Consider two neighboring nodes $u$ and $v$, scheduled to wake up in
iterations $k_u$ and $k_v$, $k_u \le k_v$, in stage $i$.
Now suppose that instead of waking up just in iterations $k_u$ and $k_v$, node $u$ wakes up in 
all iterations $k \in \{k_u\} \cup S_{k_u}$ in stage $i$ and node $v$ wakes up in 
all iterations $k \in \{k_v\} \cup S_{k_v}$ in stage $i$. Then the lemma guarantees that there exists an iteration $\ell \in S_{k_u} \cap S_{k_v}$ in stage $i$ in which both $u$ and $v$ are awake and they can share information in that common iteration.

Replacing each $S_k$ with $S_k\cup\{k\}$, we may assume $k\in S_k$ for every $k$. 
Since each stage has $T=\lceil\log\Delta\rceil$ iterations, applying 
Lemma~\ref{lem:vbt-base}, we see that every set $S_k$ satisfies $|S_k| = O(\log\log \Delta)$,
which in turn implies that every node that is scheduled to wake up in stage $i$ will wake
up in $O(\log\log \Delta)$ iterations.
Note that all nodes can compute all sets $\{S_k\}_{k=1}^{T}$ at the start of the algorithm and
reuse them in every stage for determining at the start of the stage, which iterations to wake up in.
We now provide more details about the wake up schedules of estimators and candidates separately.

\emph{Estimator schedule.} An estimator $u$ with $(I_u,K_u)=(i,k)$ wakes at every $\ell\in S_k$ with $\ell\le k$. Node $u$ acts as an estimator in iteration $k$. The earlier wake-up iterations allow $u$ to receive updates from any neighbor that joined $D$ in an earlier iteration of stage $i$.

\emph{Candidate schedule.} A candidate $v$ with $J_{v,i}=k$ wakes up in iteration $k$ to determine its residual degree and then possibly join $D$. If $v$ joins $D$ at iteration $k$, it wakes in every later iteration $\ell\in S_k$ with $\ell>k$ to inform all neighbors (who are awake in that iteration) of this fact.
Otherwise (if $v$ does not join $D$ in iteration $k$), $v$ sleeps for the remainder of the stage.

By Lemma~\ref{lem:vbt-base}, for any $w$ that joined $D$ in stage $i$ and any estimator $u$ with $J_{w,i}\le K_u$, the sets $S_{J_{w,i}}$ and $S_{K_u}$ intersect at some $\ell\in[J_{w,i},K_u]$. 
Thus, at iteration $\ell$, node $w$ informs node $u$ that it has joined $D$.
Consequently, before acting as an estimator in iteration $K_u$, node $u$ knows that it has become dominated. Algorithm~\ref{alg:base-awake} presents the full pseudocode.

\begin{algorithm}[t]
\caption{\textsc{BaseMDS-Awake}}
\label{alg:base-awake}
\begin{algorithmic}[1]
\Require $\Delta$ known to all nodes; constant $C$ from Lemma~\ref{lem:est-decision-base}; wake schedules $\{S_k\}_{k=1}^{\lceil\log\Delta\rceil}$ from Lemma~\ref{lem:vbt-base}.
\vspace{1ex}
\State $D\gets\emptyset$;\, $i^\star\gets\min\{i:T_i \le C\lceil\log\Delta\rceil^2\log n\}$\, \Comment{$T_i=\Delta/2^i$}
\State Each $v$ draws $(I_v,K_v)\sim\mathrm{Unif}([\lceil\log\Delta\rceil] \times [\lceil\log\Delta\rceil])$.
\vspace{1ex}
\Statex \textbf{Phase~1: sampling-based estimation.}
\For{$i=1,\dots,i^\star-1$} \Comment{stages}
  \State \parbox[t]{\linewidth-\algorithmicindent-\algorithmicindent}{\textbf{Round 1}: every awake candidate in $D$ informs neighbors that it is in $D$; any awake node hearing this marks itself dominated.}
  \State \parbox[t]{\linewidth-\algorithmicindent}{\textbf{Round 2}: every node wakes; every undominated node broadcasts this status, and each $v$ uses this information to compute $d_i^1(v)=|N^{+}(v)\setminus N^{+}(D)|$.}
  \State \parbox[t]{\linewidth-\algorithmicindent}{Each $v$ for which $d_i^1(v)\ge T_i$ is a candidate.} 
  \State \parbox[t]{\linewidth-\algorithmicindent}{Each candidate $v$ computes first-success iteration $J_{v,i}\in[\lceil\log\Delta\rceil]$ by tossing a sequence of biased coins with $\probit_k=\min\{1,2^k/\Delta\}$.
  \Comment{Lemma~\ref{lem:first-success-sampling-base}}}
  \For{$k=1,\dots,\lceil\log\Delta\rceil$} \Comment{iterations}
    \State Every candidate $v$ with $J_{v,i}=k$ wakes up.
    \State Every candidate $w\in D$ that joined earlier in stage $i$ wakes up
    if $k\in S_{J_{w,i}}$.
    \State Every estimator $u$ with $I_u = i$ wakes up if $k \in S_{K_u}$ and $k \le K_u$.
    \State \parbox[t]{\linewidth-\algorithmicindent-\algorithmicindent}{\textbf{Round 1}: every awake candidate informs neighbors that it is in $D$; any awake node hearing this marks itself dominated.}
    \State \parbox[t]{\linewidth-\algorithmicindent-\algorithmicindent}{\textbf{Round 2}: every $u$ with $(I_u,K_u)=(i,k)$ that was undominated at the start of stage $i$ informs all neighbors of its status.}
    \State \parbox[t]{\linewidth-\algorithmicindent-\algorithmicindent}{Each candidate $v$ computes its estimate $\widehat d(v)$ from the received statuses; if $\widehat d(v)\ge T_i$, $v$ joins $D$, and schedules wake-ups at $\{\ell\in S_k:\ell>k\}$.}
  \EndFor
\EndFor
\Statex \textbf{Phase~2: exact cleanup.}
\State Run stages $i^\star,\dots,\lceil\log\Delta\rceil$ of \textsc{BaseMDS} (Algorithm~\ref{alg:base-mds}) exactly on the residual instance, with every node awake.
\State \Return $D$.
\end{algorithmic}
\end{algorithm}

\subsection{Analysis}
\label{subsec:base-awake-analysis}

In transforming \textsc{BaseMDS} into an energy-efficient algorithm (Algorithm \ref{alg:base-awake}), we ensured that each node wakes up only $O(1)$ times per stage by virtue of it being a candidate. We first show that this transformation is correct.

Consider a node $v$ and suppose that $v$ is a candidate for stage $i$. In other words, $d_i^1(v) \ge T_i$.
Consider the following two executions for node $v$ during the iterations in stage $i$.

\begin{description}
\item[Execution 1:] For each iteration $k = 1, 2, \ldots, \lceil \log \Delta\rceil$, node $v$ tosses a coin with probability $\probit_k = \min\{1, 2^k/\Delta\}$ and joins $D$ at the first iteration $k$ during which $d_i^k(v) \ge T_i$ and the coin $k$ is successful. 
\item[Execution 2:] Set $J=\min\{k:\text{coin }k\text{ is successful}\}$. Node $v$ joins $D$ at iteration $J$ iff  
$d_i^J(v) \ge T_i$.
\end{description}
The following lemma shows that the two executions are equivalent.

\begin{lemma}[First-success sampling]
\label{lem:first-success-sampling-base}
If the coins used in executions 1 and 2 are coupled, then either node $v$ does not join $D$ in both executions or node $v$ joins $D$ in both executions in the same iteration.
Therefore, the two executions have the same partial dominating set $D$ after every iteration of stage $i$.
\end{lemma}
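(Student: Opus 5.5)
The plan is to prove the lemma by induction on the iteration index $k$ within stage $i$. The statement being maintained is that, under the coupling, after each iteration the partial dominating set $D$ is identical in the two executions. The claim about $v$ then follows as a by-product of the inductive step.

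First I make the setting precise. In both executions every node other than $v$ behaves identically: each uses its own coins, which are coupled, and each decides from the current residual-degree information. So the only thing that can differ is whether and when $v$ joins $D$. Fix the outcomes of coins $1,\dots,\lceil\log\Delta\rceil$ for $v$, and set $J=\min\{k:\text{coin }k\text{ is successful}\}$. This is well defined because $\probit_{\lceil\log\Delta\rceil}=1$. The induction hypothesis at iteration $k$ is that the set $D$ just before iteration $k$ is the same in both executions. Consequently $d_i^k(v)$, and the residual degree of every other node, are the same in both executions.

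\emph{Iterations $k<J$.} Coin $k$ fails, so in Execution~1 $v$ does not join, even if $d_i^k(v)\ge T_i$. In Execution~2, $v$ only considers joining at iteration $J$, so it does not join at $k$ either. All other nodes act identically because they see the same $D$. Hence the sets $D$ after iteration $k$ coincide.

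\emph{Iteration $k=J$.} By the hypothesis, $d_i^J(v)$ is the same value in both executions. In Execution~1, $v$ has not joined earlier, coin $J$ succeeds, and so $v$ joins at $J$ iff $d_i^J(v)\ge T_i$. This is exactly the rule of Execution~2.

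\emph{Iterations $k>J$.} If $v$ joined at $J$, it is in $D$ in both executions and stays there. If $v$ did not join, the key observation is monotonicity: $D$ only grows, so $d_i^{k}(v)\le d_i^{J}(v)<T_i$ for all $k>J$. Hence $v$ never joins in Execution~1, whatever its later coins show, and it never joins in Execution~2 by definition. In every case the sets $D$ stay equal after each iteration.

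The only genuinely non-trivial step is this monotonicity of residual degrees after a failed check at $J$. It holds because $N^{+}(D)$ is nondecreasing over iterations. The rest is bookkeeping to ensure the coupling is stated over all nodes' coins, so that "same $D$" propagates through the simultaneous decisions of the other nodes.
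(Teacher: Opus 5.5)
Your proposal is correct and follows essentially the same argument as the paper: you define the first-success index $J$, split on whether $d_i^J(v)\ge T_i$, and use the fact that residual degrees are nonincreasing within a stage to rule out later joins in Execution~1. Your explicit induction on $k$ also makes precise a point the paper leaves implicit, namely that $d_i^J(v)$, and hence every node's decision, agrees across the two executions because $D$ agrees before each iteration.
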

\begin{proof}
Couple the two executions by using the same sequence of coins for every candidate $v$, and let
$J=\min\{k:\text{coin }k\text{ is successful}\}$.
If $d_i^J(v)\ge T_i$, then $v$ joins $D$ in iteration $J$ in both executions.
Otherwise, $d_i^J(v)<T_i$, and since residual degree is nonincreasing during a stage,
$d_i^k(v)<T_i$ for every $k>J$.
Thus $v$ does not join $D$ in either execution.

Hence the same nodes join $D$ in every iteration, and the two executions have the same partial dominating set $D$ after every iteration of stage $i$.
\end{proof}

\subsubsection{Correctness of estimation}

The next lemma shows that, in Phase~1, the sampling estimates of residual degree yield a correct eligibility decision whenever the true residual degree is sufficiently far from the threshold. 

\begin{lemma}
\label{lem:est-decision-base}
Fix constants $\gamma\in(0,1/2)$ and $c_0\ge 1$.
For a sufficiently large constant $C=C(\gamma,c_0)$, with probability at least $1-n^{-c_0}$, the following holds   for every Phase~1 candidate $v$, every stage $i$ in which it is a candidate, and every iteration $k$ of that stage:
\[
  d_i^k(v)\ge(1+\gamma)T_i
  \;\Longrightarrow\;
  \widehat d(v)\ge T_i,\qquad\qquad 
  d_i^k(v)\le(1-\gamma)T_i
  \;\Longrightarrow\;
  \widehat d(v)<T_i,
\]
where $\widehat d(v)$ denotes the residual degree estimate that $v$ computes in iteration $k$, stage $i$.
\end{lemma}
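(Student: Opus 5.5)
Fix a candidate $v$, a stage $i<i^\star$ and an iteration $k$, and let $U\subseteq N^{+}(v)$ be the set of nodes undominated at the start of stage $i$, so $|U|=d_i^1(v)\ge T_i$. Let $R\subseteq U$ be the set of nodes still undominated just before round 2 of iteration $k$, so $|R|=d_i^k(v)$. Then $X$ counts the $u\in U$ with $(I_u,K_u)=(i,k)$, and $Y$ counts those that also lie in $R$. We need $Y/X$ to estimate $|R|/|U|$ with relative error small enough that $\widehat d(v)=(Y/X)|U|$ lands on the correct side of $T_i$.

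The key point is that the estimator assignments $(I_u,K_u)$ are drawn independently of everything that determines $R$. I will argue this by a deferred-decision coupling. Whether a node $w$ joins $D$ in iteration $k'<k$ of stage $i$ depends on $w$'s coins and on the statuses reported by estimators assigned to $(i,k')$. It never depends on which nodes are assigned to $(i,k)$. The VBT schedule of Lemma~\ref{lem:vbt-base} only makes reported statuses accurate; it does not feed assignments back into decisions.

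There is one subtlety: an estimator's status is correct only if the corresponding VBT message reached it. By the VBT property, a node $u$ with $K_u=k$ has heard from every neighbor $w\in D$ with $J_{w,i}\le k$. Nodes that joined in iteration $k$ itself announce in round 1 of iteration $k$, since $k\in S_k$. Hence the reported status equals membership in $R$. Conditioned on $U$ and $R$, each $u\in U$ therefore lands in $(i,k)$ independently with probability $\pi=1/\lceil\log\Delta\rceil^2$. This gives
\[
X\sim\mathrm{Bin}(|U|,\pi),\qquad Y\sim\mathrm{Bin}(|R|,\pi),\qquad X-Y\sim\mathrm{Bin}(|U|-|R|,\pi),
\]
with $Y$ and $X-Y$ independent.

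For the concentration step, write $\mu_U=\pi|U|\ge \pi T_i\ge \pi T_{i^\star-1}> C\log n$, using the definition of $i^\star$. A Chernoff bound gives $X\in[(1\pm\delta)\mu_U]$ with failure probability $2e^{-\delta^2 C\log n/3}$.

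In the case $|R|\ge(1+\gamma)T_i$, we have $\pi|R|\ge \pi T_i\ge C\log n$. Chernoff then gives $Y\ge(1-\delta)\pi|R|$, so
\[
\widehat d(v)\ge\frac{1-\delta}{1+\delta}\,|R|\ge\frac{1-\delta}{1+\delta}(1+\gamma)T_i\ge T_i
\]
once $\delta=\delta(\gamma)$ is small enough.

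In the case $|R|\le(1-\gamma)T_i$, $|R|$ may be tiny, so I will bound $Y$ additively. I will use the upper-tail Chernoff bound $\Pr[Y\ge \pi|R|+\delta\pi T_i]\le e^{-\Omega(\delta^2\pi T_i)}$, which holds since $Y$ is dominated by $\mathrm{Bin}(T_i,\pi)$-type tails. This yields $Y\le(1-\gamma+\delta)\pi T_i$. Combined with $X\ge(1-\delta)\pi|U|$, it gives
\[
\widehat d(v)\le\frac{1-\gamma+\delta}{1-\delta}\,T_i<T_i
\]
for small $\delta$.

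Each failure probability is $n^{-\Omega(\delta^2 C)}$. A union bound over at most $n$ nodes and $\lceil\log\Delta\rceil^2\le n^2$ stage--iteration pairs gives total failure probability at most $n^{-c_0}$, once $C$ is large in terms of $\gamma$ and $c_0$. Since $X>0$ on the good event, the $X=0$ case does not arise.

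The main obstacle is making the independence claim rigorous. Assignments to $(i,k')$ for $k'<k$ do influence $R$, and the assignment vector of a node is a single draw over all pairs. I plan to handle this by conditioning on $U$, $R$ and the set of nodes whose assignment is not $(i,k)$. Given that information, each remaining node of $U$ is assigned to $(i,k)$ with probability $\pi/(1-\text{mass already revealed})\ge\pi$. An alternative is to expose assignments pair by pair in chronological order, so that the nodes not yet exposed are uniform over the remaining pairs. Either way the binomial bounds above still hold up to constant factors.
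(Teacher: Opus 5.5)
Your two-case arithmetic is sound, and the additive upper-tail bound for $|R|\le(1-\gamma)T_i$ handles small $|R|$ more cleanly than the paper's split at $\gamma T_i/8$. The gap is the distributional claim the whole argument rests on: that, conditioned on $U$ and $R$, each $u\in U$ lands in $(i,k)$ independently with probability $\pi$, so that $X\sim\mathrm{Bin}(|U|,\pi)$ and $Y\sim\mathrm{Bin}(|R|,\pi)$. This does not hold as stated. $U$ and $R$ are functions of the assignments to earlier pairs: $U$ depends on earlier stages, and $R$ also depends on iterations $1,\dots,k-1$ of stage $i$. Because each node draws a single pair, information about those assignments changes the law of the event $\{(I_u,K_u)=(i,k)\}$. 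So your sentence ``it never depends on which nodes are assigned to $(i,k)$'' does not support the ``therefore''.

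You notice this in your last paragraph, but the repair you sketch does not recover your formulas. Suppose you expose assignments pair by pair, as in the paper's sequential view. Then, given the history before $(i,k)$, $X\sim\mathrm{Bin}(|A'|,1/M)$ and $Y\sim\mathrm{Bin}(|B'|,1/M)$. Here $A'\subseteq U$ is the set of nodes not yet assigned, $B'=R\cap A'$, and $M$ is the number of pairs from $(i,k)$ onward. Hence $Y/X$ concentrates around $|B'|/|A'|$, not around $|R|/|U|$. The per-node bound $1/M\ge\pi$ does not help, because the pool has shrunk from $U$ to $A'$. ``Up to constant factors'' is also not enough: the threshold test needs $Y/X$ within a $1\pm O(\gamma)$ factor of $|R|/|U|$. (Conditioning on ``the set of nodes whose assignment is not $(i,k)$'' would even make $X$ deterministic; you presumably mean the nodes revealed at earlier pairs.)

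The missing ingredient is the first layer of the paper's argument. Write $L=\lceil\log\Delta\rceil$. You must show that the unexposed fraction is essentially the same in $U$ and in $R$:
\begin{itemize}
\item $|A'|\in(1\pm\gamma/4)|U|M/L^2$;
\item when $|R|\ge\gamma T_i/8$, $|B'|\in(1\pm\gamma/4)|R|M/L^2$.
\end{itemize}
Both follow from Chernoff together with $T_i>CL^2\log n$. For smaller $|R|$ only an upper bound on $|B'|$ is needed, and your additive tail bound fits there. Then apply Chernoff to $X$ and $Y$ conditionally with rate $1/M$ and compose the two layers, shrinking $\gamma/4$ so the product of the two error factors stays inside the decision window.

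Be aware that this first layer is exactly where the correlation you identified lives: who served as an estimator earlier influences who is still undominated. The paper gets these bounds by computing $\mathbb{E}[|A'|]=|U|M/L^2$ and $\mathbb{E}[|B'|]=|R|M/L^2$ with $U$ and $R$ treated as fixed sets. So if you want an airtight proof, this step still needs a real argument that earlier estimator roles do not bias $|B'|/|R|$ relative to $|A'|/|U|$.
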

\begin{proof}
Let $L=\lceil\log\Delta\rceil$. Consider a vertex $v$ which is a Phase~1 candidate in a stage $i<i^\star$. Consider an iteration $k$ in stage $i$.
Let $A$ be the nodes of $N^+(v)$ that are undominated at the beginning of stage $i$, and let $B\subseteq A$ be those still undominated immediately before iteration $k$. 
Thus $|A|=d_i^1(v)$ and $|B|=d_i^k(v)$. 

Our goal is to show that the fraction of estimators that are still undominated is close to $|B|/|A|$. 
Let $X$ be the number of estimators in iteration $k$, stage $i$, that belong to $A$ and $Y$ be the number of these estimators that belong to $B$.
Since we use $\widehat d(v)=(Y/X)|A|$ as the residual degree estimator, if $Y/X \sim |B|/|A|$, then 
$\widehat d(v) \sim |B|$.

\noindent
\textit{First, we view the estimator assignments sequentially.} Let $\mathcal{S}$ denote the $L^2$ stage--iteration pairs, ordered lexicographically. Although the algorithm assigns to each node one stage--iteration pair selected uniformly at random at the beginning (Line 2, Algorithm \ref{alg:base-awake}), for the purposes of the analysis, we may equivalently generate these assignments in each iteration, as follows.
Let $M$ denote the number of stage--iteration pairs in the suffix of $\mathcal{S}$ starting at (and including) the current pair $(i, k)$.
Imagine that at the start of iteration $k$ in stage $i$, every node not yet assigned a stage--iteration pair is assigned the current pair $(i, k)$, independently, with probability $1/M$.
A simple coupling argument in which the random choices made by the two processes are coupled together, can establish the equivalence.
Indeed, the probability that a node is assigned to any particular pair is $\prod_{r=M+1}^{L^2}(1-1/r)\cdot(1/M)=1/L^2$. Furthermore, the probability that a node has not been assigned prior to 
the current stage--iteration pair $(i, k)$ is $\prod_{r=M+1}^{L^2}(1-1/r)=M/L^2$.
Let $A'\subseteq A$ be the nodes assigned to stage--iteration pair $(i,k)$ or to a later pair, and let $B'=B\cap A'$.

\noindent
\textit{Next, we bound the sizes of $|A'|$ and $|B'|$.}
We know that $\mathbb{E}[|A'|] = |A|\cdot M/L^2$. 
Since $v$ is a candidate in stage $i$, $|A| \ge T_i$.
Furthermore, since $i < i^\star$, $T_i > CL^2\log n$.
Therefore, $\mathbb{E}[|A'|] > C \log n$ and using Chernoff bounds~\cite[Chapter~4]{mitzenmacher2005probability}, in the form of Corollary~\ref{cor:chernoff-whp}(1) in Appendix~\ref{app:chernoff}, we get that 
\begin{equation}
\label{eqn:APrimeBound}
\mathrm{Prob}[|A'| \in (1\pm\gamma/4)|A|\cdot (M/L^2)] \ge 1 - n^{-(c_0+5)}.
\end{equation}
Similarly, $\mathbb{E}[|B'|] = |B|\cdot M/L^2$. To bound the deviation of $|B'|$ from its expectation, we assume that 
$|B|\ge\gamma T_i/8$. An application of Chernoff bounds implies that 
\begin{equation}
\label{eqn:BPrimeBound}
\mathrm{Prob}[|B'| \in (1\pm\gamma/4)|B| \cdot (M/L^2)] \ge 1 - n^{-(c_0+5)}.
\end{equation}
We postpone considering the case $|B|<\gamma T_i/8$ until the end of the proof.

\noindent
\textit{Next we bound $X$ and $Y$ in terms of $|A'|$ and $|B'|$ respectively.}
Conditioned on all assignments in previous stage--iteration pairs, $A'$ and $B'$ are fixed.
According to our sequential view of selecting estimators in each iteration, each node of $A'$ is assigned to the current stage--iteration pair $(i, k)$ independently, with probability $1/M$. 
Thus $X$ counts the nodes of $A'$ assigned to the current stage--iteration pair, while $Y$ counts the nodes of $B'$ assigned to the current stage--iteration pair. 
Therefore $\mathbb{E}[X|A'] = |A'|/M$ and $\mathbb{E}[Y|A'] = |B'|/M$.
Let $\mathcal{E}_1$ denote the event $|A'| \in (1\pm\gamma/4)|A|\cdot (M/L^2)$, whose probability is bounded in (\ref{eqn:APrimeBound}).
We now bound $X$ conditioned on $\mathcal{E}_1$.
\begin{equation}
\mathrm{Prob}\left[X \in \left(1 \pm \gamma/4\right) \frac{|A'|}{M} \middle| \mathcal{E}_1\right] \ge 1 - \exp\left(-\left(\frac{\gamma}{4}\right)^2 \cdot \frac{|A'|}{M} \cdot \frac{1}{3}\right) \ge 1 - n^{-(c_0+5)}.
\end{equation}
The second inequality above is due to the conditioning on $\mathcal{E}_1$, which implies that
$|A'| \ge (1 - \gamma/4)|A|\cdot (M/L^2) > (1 - \gamma/4)C\log n$.
To obtain bounds on $Y$, as before, we assume $|B|\ge\gamma T_i/8$.
Let $\mathcal{E}_2$ denote the event 
$|B'| \in \left(1\pm\gamma/4\right)|B|\cdot (M/L^2)$, whose probability is bounded in (\ref{eqn:BPrimeBound}). We bound $Y$ conditional on $\mathcal{E}_2$ as follows:
\begin{equation}
\mathrm{Prob}\left[Y \in (1 \pm \gamma/4) \frac{|B'|}{M} \middle| \mathcal{E}_2\right] \ge 1 - \exp\left(-\left(\frac{\gamma}{4}\right)^2 \cdot \frac{|B'|}{M} \cdot \frac{1}{3}\right) \ge 1 - n^{-(c_0+5)}.
\end{equation}
The second inequality above is due to the conditioning on $\mathcal{E}_2$, which implies that
$|B'| \ge \left(1 - \frac{\gamma}{4}\right)|B|\cdot \frac{M}{L^2} \ge \left(1 - \frac{\gamma}{4}\right)\cdot \frac{\gamma T_i}{8} \cdot \frac{M}{L^2} > \left(1 - \frac{\gamma}{4}\right)\cdot\frac{\gamma}{8} \cdot C \log n.$

\noindent
\textit{Next we bound $X$ and $Y$ in terms of $|A|$ and $|B|$ respectively.}
\begin{eqnarray}
\mathrm{Prob}\left[X \in (1 \pm \gamma/4) \frac{|A|}{L^2}\right] & \ge & 
\mathrm{Prob}\left[X \in (1 \pm \gamma/4) \frac{|A|}{L^2} \middle| \mathcal{E}_1\right] \cdot \mathrm{Prob}[\mathcal{E}_1] \nonumber \\
& \ge & \mathrm{Prob}\left[X \in \frac{(1 \pm \gamma/4)}{(1 \pm \gamma/4)} \frac{|A'|}{M} \middle| \mathcal{E}_1\right] \cdot (1-n^{-(c_0+5)}) \nonumber \\
& \ge & \mathrm{Prob}\left[X \in \left(1 \pm \gamma/4\right) \frac{|A'|}{M} \middle| \mathcal{E}_1\right] \cdot (1-n^{-(c_0+5)})\nonumber \\
& \ge & (1 - n^{-(c_0+5)}) \cdot (1-n^{-(c_0+5)}) \nonumber \\
& \ge & 1 - n^{-(c_0+4)}. \label{eqn:XBound}
\end{eqnarray}
The second inequality above follows from (\ref{eqn:APrimeBound}).
As before, to bound $Y$, we assume $|B|\ge\gamma T_i/8$.
In this case, we use calculations very similar to the calculations above involving $X$ and $|A|$ to show that
\begin{equation}
\label{eqn:YBound}
\mathrm{Prob}\left[Y \in (1 \pm \gamma/4) \frac{|B|}{L^2}\right] \ge 1 - n^{-(c_0+4)}.
\end{equation}

\noindent
\textit{Finally, we translate the bounds on $X$ and $Y$ into decision probabilities.}
\begin{description}
\item[Case 1: $d_i^k(v)=|B|\ge(1+\gamma)T_i$.] Then,
\begin{eqnarray*}
\widehat d(v) & = & \frac{Y}{X} \cdot |A|\\
              & \ge & \frac{(1-\gamma/4) |B|}{L^2} \cdot \frac{L^2}{(1+\gamma/4)|A|}\cdot|A|\quad \text{with probability} \ge 1-n^{-(c_0+3)}\\
              & = & \frac{(1-\gamma/4)}{(1+\gamma/4)}\cdot |B|\\
              & \ge & \frac{(1-\gamma/4)}{(1+\gamma/4)} \cdot (1+\gamma) \cdot T_i\\
              & \ge & (1 - \gamma/2) \cdot (1+\gamma) \cdot T_i\\
              & \ge & T_i
\end{eqnarray*}
The second inequality above follows from (\ref{eqn:XBound}) and (\ref{eqn:YBound}).
\item[Case 2: $d_i^k(v)=|B|\le(1-\gamma)T_i$.] We consider two subcases depending on $|B|$.
\begin{itemize}
\item $|B|\ge\gamma T_i/8$: In this case, we use the upper bound on $Y$ in (\ref{eqn:YBound}), the lower bound on $X$ in (\ref{eqn:XBound}), and calculations very similar to Case 1 to show that $\widehat d(v) \le T_i$ with probability at least $1 - n^{-(c_0+3)}$.

\item $|B|< \gamma T_i/8$: 
Note that 
$$\mathbb{E}[Y] = \mathbb{E}[\mathbb{E}[Y||B'|]] = \frac{\mathbb{E}[|B'|]}{M} = \frac{|B|}{L^2} < \frac{\gamma T_i}{8L^2}.$$
Applying a one-sided Chernoff bound to $Y$ gives 
$$\mathrm{Prob}[Y\le(\gamma/4)T_i/L^2] \ge 1 - n^{-(c_0+4)}.$$ 
Here we use that $T_i > CL^2 \log n$.
Together with the lower bound on $X$ in (\ref{eqn:XBound}), and using similar calculations as in Case (1), this gives 
$$\widehat d(v)\le \frac{\gamma T_i}{4(1-\gamma/4)} < T_i$$
with probability at least $1 - n^{-(c_0+3)}$. The second inequality above requires $\gamma < 1/2$, as stipulated by the theorem statement.
\end{itemize}
\end{description}

Thus both claims in the theorem statement hold for the fixed triple $(v,i,k)$ with probability at least $1 - n^{-(c_0+3)}$. Since there are at most $nL^2\le n^3$ triples $(v,i,k)$, a union bound gives the claimed probability of at least $1-n^{-c_0}$.
\end{proof}

\subsubsection{Awake complexity}

\begin{lemma}\label{lem:phase1-awake-base}
In Algorithm~\ref{alg:base-awake}, each node is awake for $O(\log\Delta+\log\log\Delta)=O(\log\Delta)$ rounds during Phase~1.
\end{lemma}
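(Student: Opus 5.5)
We prove the bound by charging each awake round of a node $v$ during Phase~1 to one of three roles: status exchange at the start of a stage, candidate, and estimator. Phase~1 consists of $i^\star-1\le\lceil\log\Delta\rceil$ stages. Every round in which $v$ is awake falls into one of the following categories:
\begin{itemize}
\item the two status-exchange rounds at the beginning of each stage;
\item rounds in iterations where $v$ wakes as a candidate, either at $J_{v,i}$ or afterwards as a member of $D$;
\item rounds in iterations where $v$ wakes as an estimator.
\end{itemize}
Each iteration consists of two rounds, so it suffices to count iterations and multiply by two.

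The status-exchange rounds contribute $2$ per stage, hence at most $2\lceil\log\Delta\rceil=O(\log\Delta)$ in total.

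For the candidate role, fix a stage $i$ in which $v$ is a candidate. By the candidate schedule, $v$ wakes only in iteration $J_{v,i}$, which is well defined because $\probit_{\lceil\log\Delta\rceil}=1$. If it does not join $D$, it sleeps for the rest of the stage, so it spends exactly one iteration awake in that stage. Node $v$ joins $D$ in at most one stage over the whole algorithm: once $v\in D$, we have $d(v)=0<T_i$ in later stages, so $v$ is never again a candidate. In that single stage, $v$ additionally wakes at the iterations in $\{\ell\in S_{J_{v,i}}:\ell>J_{v,i}\}$. By Lemma~\ref{lem:vbt-base} with $T=\lceil\log\Delta\rceil$, this set has size at most $\lceil\log\lceil\log\Delta\rceil\rceil+1=O(\log\log\Delta)$. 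The candidate role therefore contributes $O(\log\Delta)+O(\log\log\Delta)$ iterations.

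For the estimator role, $v$ has exactly one assignment $(I_v,K_v)$, so it acts as an estimator in at most one stage. In that stage it wakes at iterations $\ell\in S_{K_v}$ with $\ell\le K_v$, which again number $O(\log\log\Delta)$ by Lemma~\ref{lem:vbt-base} (recall $K_v\in S_{K_v}$ after the augmentation). The same node might have both roles in the same iteration; this only overcounts, which is harmless for an upper bound.

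Summing the three contributions gives $O(\log\Delta+\log\log\Delta)=O(\log\Delta)$ awake rounds. The only point requiring care is the observation that the $O(\log\log\Delta)$ VBT overhead is paid once in total rather than once per stage. This holds for the candidate role because joining $D$ happens at most once, and for the estimator role because the assignment $(I_v,K_v)$ is unique. The remainder of the argument is direct counting.
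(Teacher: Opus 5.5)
Your proof is correct and uses the same accounting as the paper. Each stage costs two status-exchange rounds plus the single first-success iteration $J_{v,i}$, and the $O(\log\log\Delta)$ virtual-binary-tree overhead is paid once for the unique join and once for the unique estimator assignment $(I_v,K_v)$. Your explicit argument that a node joins $D$ at most once, because $d_i^1(v)=0<T_i$ in every later stage, makes precise a point the paper's proof leaves implicit.
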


\begin{proof}
\textbf{As a candidate}: Every node is awake for 2 rounds at the start of each stage $i$, for status exchange. If a node $v$ determines that it is a candidate for stage $i$, it wakes up during its first-success iteration $J_{v, i}$. This implies $O(\log\Delta)$ awake rounds during Phase~1. Furthermore, if a node joins $D$ in stage $i$, it wakes up an additional $O(\log\log\Delta)$ rounds in stage $i$, as prescribed by the virtual binary tree schedule $S_{J_{v,i}}$. 

\noindent
\textbf{As an estimator}: Every node $u$ wakes at its randomly assigned stage--iteration pair $(I_u, K_u)$ and also during the $O(\log\log \Delta)$ iterations in $S_{K_u}$ in stage $I_u$.  

\noindent
Thus nodes wake up for a total of $O(\log\Delta+\log\log\Delta)=O(\log\Delta)$ rounds.
\end{proof}

\subsubsection{Approximation analysis}

We start by stating the approximation guarantee provided by Algorithm \ref{alg:base-mds}, proved in \cite{grunau2020improved}. 

\begin{theorem}[\cite{grunau2020improved}, Theorem 3.1]
\label{thm:GrunauApproximation}
Let $G$ be some MDS instance and let $D$ denote the solution returned by Algorithm \ref{alg:base-mds}. Furthermore, let $\opt$ denote the value of an optimal fractional dominating set for $G$. Then, $\mathbb{E}[|D|] = O(\log \Delta) \cdot \opt$.
\end{theorem}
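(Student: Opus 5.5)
The plan is a dual-fitting (charging) argument, comparing $D$ against the LP relaxation of MDS. The primal is $\min\sum_w x_w$ subject to $\sum_{w\in N^+(u)}x_w\ge 1$ for every $u$. Its dual is $\max\sum_u y_u$ subject to $\sum_{u\in N^+(w)}y_u\le 1$ for every $w$. By weak duality every dual-feasible $y$ satisfies $\sum_u y_u\le\opt$. I will build random charges $z_u\ge 0$ with two properties:
\begin{enumerate}
\item $\mathbb{E}[|D|]\le\sum_u\mathbb{E}[z_u]$;
\item for every node $w$, $\sum_{u\in N^+(w)}\mathbb{E}[z_u]=O(\log\Delta)$.
\end{enumerate}
Then $y_u=\mathbb{E}[z_u]/O(\log\Delta)$ is dual-feasible, which gives $\mathbb{E}[|D|]=O(\log\Delta)\cdot\opt$. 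Working with expectations keeps everything linear, so no concentration is needed.

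\textbf{Charging.} When $v$ joins $D$ in iteration $k$ of stage $i$, it has $d_i^k(v)\ge T_i$. Each currently undominated $u\in N^+(v)$ receives charge $1/T_i$ from $v$, so $v$ distributes total charge at least $1$ and property 1 holds. A node $u$ is charged only in the iteration in which it first becomes dominated, say iteration $k$ of stage $i$. Its charge is then $Z_u/T_i$, where $Z_u$ is the number of nodes in $N^+(u)$ that join $D$ in that iteration. The key local claim is
\[
\mathbb{E}[Z_u\mid u\text{ first dominated in stage }i,\text{ history}]=O(1).
\]
Condition on $u$ being undominated at the start of iteration $k$, and let $c_k$ be the number of eligible candidates in $N^+(u)$ then. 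Given the history, $Z_u$ is a sum of independent Bernoulli$(\probit_k)$ variables, so its mean is $\mu_k=c_k\probit_k$. If $\mu_k=O(1)$, then $\mathbb{E}[Z_u\mid Z_u\ge 1]=O(1)$. If $\mu_k$ is large, I use $c_k\le c_{k-1}$ and $\probit_{k-1}=\probit_k/2$: iteration $k-1$ then had mean at least $\mu_k/2$, so $u$ survived it with probability at most $e^{-\mu_k/2}$. The event of reaching iteration $k$ with large $\mu_k$ is therefore exponentially unlikely, and summing $\mu_k e^{-\mu_k/2}$ over the geometric scale gives $O(1)$. For the base case, $c_1\le\Delta$ and $\probit_1=2/\Delta$ give $\mu_1\le 2$. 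Hence $\mathbb{E}[z_u]\le O(1)\cdot\mathbb{E}[1/T_{i(u)}]$, where $i(u)$ is the stage in which $u$ is dominated.

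\textbf{Dual feasibility.} First I check that after stage $i$ every node has residual degree below $T_i$. Any node with $d\ge T_i$ is still eligible in the last iteration, where $\probit_{\lceil\log\Delta\rceil}=1$, so it joins $D$. Now fix $w$ and order the nodes of $N^+(w)$ by the time they become dominated. When $u$ is dominated in stage $i$, at that moment $u$ is still undominated, and $w$'s residual degree $d$ (counting $u$) satisfies $d<T_{i-1}=2T_i$, because the stage-$(i-1)$ bound above applies. Thus $1/T_i\le 2/d$. The residual degree of $w$ drops by one for each such $u$, so the sum over $u\in N^+(w)$ telescopes into a harmonic sum:
\[
\sum_{u\in N^+(w)}\mathbb{E}[z_u]\le O(1)\sum_{d=1}^{\Delta}\frac{2}{d}=O(\log\Delta).
\]
Ties, where several neighbours of $w$ are dominated in the same iteration, only help, since $1/d$ evaluated at the largest value is at most the average. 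For $i=1$ the stage-$(i-1)$ bound does not exist, but $T_1=\Delta/2$ and $d\le\Delta$ still give $1/T_1\le 2/d$.

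\textbf{Main obstacle.} The delicate step is the $O(1)$ bound on $\mathbb{E}[Z_u]$, done jointly with the harmonic bound. The conditional expectation must be taken with respect to the history up to the iteration in which $u$ is dominated, not after it. This matters because whether $u$ is dominated in a given stage and how many neighbours join are correlated. I would first make the conditioning precise iteration by iteration, so that $\mathbb{E}[z_u]$ equals a sum over iterations of $\Pr[\text{reach}]\cdot\mathbb{E}[Z_u\,\mathbf{1}(Z_u\ge1)]/T_i$. I would then pair each term with the deterministic inequality $1/T_i\le 2/d_w$ for the stopping time at which $u$ is dominated, so that the $O(1)$ factor and the harmonic bound combine without any dependence issue.
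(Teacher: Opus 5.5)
Your proposal is correct, but it is organized differently from the paper's argument. The paper proves this statement through the proof of Theorem~\ref{thm:pq-basemds} at $\stagebase=\iterbase=2$.

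\textbf{What is shared and what differs.} You use the same two ingredients as the paper. The first is the first-cover bound: in the iteration where $u$ is first dominated, $O(1)$ selected vertices cover $u$ in expectation, because the eligible count in $N^{+}(u)$ is nonincreasing and $\probit_k\le 2\probit_{k-1}$. The second is the invariant that all residual degrees are below $T_{i-1}=2T_i$ at the start of stage $i$. The difference is how the invariant is used. The paper uses it globally: it gets $|U_i|\le 2T_i\cdot\opt$, charges $T_i|D_i|\le\sum_{u\in U_i}X_u$, and obtains $O(1)\cdot\opt$ per stage, so the $\log\Delta$ factor is the number of stages. You use it locally at each fixed $w$, turning it into $1/T_{i(u)}\le 2/d$ and building a scaled dual-feasible solution, so the $\log\Delta$ factor is $H_\Delta$.

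\textbf{What each route buys.} Your route certifies the bound against the fractional optimum directly through weak duality, which is what the statement literally asserts. The paper's phrase ``an optimum dominating set covers $U_i$'' needs a one-line LP averaging to reach fractional $\opt$. The paper's route is shorter and parameterizes better. With general $\stagebase$, your harmonic step only gives $1/T_i<\stagebase/d$, hence $O(\stagebase\iterbase\ln\Delta)$, which is a $\log\stagebase$ factor worse than Theorem~\ref{thm:pq-basemds}. You can recover the paper's bound by replacing $H_\Delta$ with the observation that at most $\stagebase T_i$ vertices of $N^{+}(w)$ are first dominated in stage $i$. That bounds each dual constraint by $\stagebase\lceil\log_\stagebase\Delta\rceil$ times the first-cover constant.

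\textbf{One detail to tighten.} Summing $\mu_k e^{-\mu_k/2}$ gives the absolute bound $\mathbb{E}[Z_u\mathbf{1}\{i(u)=i\}\mid\text{start of stage }i]=O(1)$, which is the form the paper proves. It does not give the bound conditioned on $i(u)=i$ that you state. The conditional form is true: large $\mu_k$ forces $\mu_{k-1}\ge\mu_k/2$, so $\mathbb{E}[\mathbf{1}\{\text{reach }k\}\mu_k]$ is $O(1)$ times the probability of being dominated at iteration $k-1$. But you do not need it. The absolute form already gives
\[
\mathbb{E}[z_u]\le O(1)\,\mathbb{E}\Bigl[\sum_{i\le i(u)}1/T_i\Bigr]\le O(1)\,\mathbb{E}[2/T_{i(u)}],
\]
because $1/T_i$ grows geometrically. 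Also, your harmonic bound $\sum_{u\in N^{+}(w)}1/T_{i(u)}\le 2H_\Delta$ holds on every sample path. So linearity of expectation finishes the argument, and the joint handling you flag as the main obstacle is unnecessary.
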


Algorithm \ref{alg:base-awake} faithfully mimics Algorithm \ref{alg:base-mds} except that it uses residual degree \textit{estimates}, rather than the actual residual degrees.
These residual degree estimates are approximately correct, as shown in Lemma \ref{lem:est-decision-base}.
The following lemma shows that the approximation guarantee in Theorem \ref{thm:GrunauApproximation} is maintained,
even using these approximate residual degree estimates.

\begin{lemma}[Approximate eligibility]
\label{lem:approx-elig-base}
Fix constants $0<a<1<b$ and a cutoff stage $h\in\{1,\dots,\lceil\log\Delta\rceil\}$. Consider a variant of the \textsc{BaseMDS} algorithm (Algorithm~\ref{alg:base-mds}) in which, for stages $i<h$, the exact eligibility test for being a candidate in stage $i$, $d_i^k(v)\ge T_i$, is replaced by any rule satisfying
\[
  d_i^k(v)\ge bT_i\;\Rightarrow\;v\text{ is a candidate},\qquad d_i^k(v)\le aT_i\;\Rightarrow\;v\text{ is not a candidate},
\]
with the count of eligible vertices in $N^{+}(u)\setminus D$ nonincreasing in $k$ while $u$ remains undominated; stages $i\ge h$ use the exact eligibility test. Then $D$ is an $O\!\left(\frac{b}{a}\cdot\log\Delta\right)$-approximate dominating set in expectation.
\end{lemma}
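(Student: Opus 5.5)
We will redo the charging analysis of Grunau, Mitrovi\'c, Rubinfeld, and Vakilian (behind Theorem~\ref{thm:GrunauApproximation}) with the relaxed eligibility rule. The bound is obtained by a fractional charging scheme. Whenever a vertex $v$ joins $D$ in stage $i$, iteration $k$, we distribute one unit of charge equally among the undominated vertices of $N^{+}(v)$ at that moment. For stages $i<h$, eligibility implies $d_i^k(v)>aT_i$, so each such vertex receives at most $1/(aT_i)$ per joining neighbor. The goal is to show that each undominated vertex $u$ receives total expected charge $O(\frac{b}{a}\cdot\frac{1}{\bar d(u)})$ in the stage where it becomes dominated, where $\bar d(u)$ measures the largest residual degree in $N^{+}(u)$. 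Summing over all stages then gives the usual $\sum_u \frac{1}{\max_{w\in N^+(u)} d(w)}$-type quantity, which is bounded by $O(\log\Delta)\cdot\opt$ by the standard greedy/LP duality argument: scaling it by $1/H_\Delta$ gives a feasible fractional packing.

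\emph{Degree invariant at stage boundaries.} At the start of stage $i<h$, every vertex has residual degree less than $bT_{i-1}=2bT_i$. The reason is that any vertex with residual degree at least $bT_{i-1}$ throughout stage $i-1$ is a candidate in every iteration of that stage. In the last iteration its probability is $\probit_{\lceil\log\Delta\rceil}=1$, so it joins (or its degree dropped). This gives $\max_{w\in N^+(u)} d(w) = O(bT_i)$ for every undominated $u$ during stage $i$. The joining charge to $u$ per joiner is at most $1/(aT_i)$, which is $O(\frac{b}{a})\cdot\frac{1}{\max_w d(w)}$.

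\emph{Bounding the number of joiners covering $u$.} Fix $u$ undominated at the start of stage $i$. Let $c_k$ denote the number of eligible vertices in $N^{+}(u)$ at iteration $k$. By the monotonicity hypothesis in the statement, $c_k$ is nonincreasing while $u$ stays undominated. I will reproduce the Grunau et al.\ argument: the expected number of vertices of $N^+(u)$ joining in iteration $k$ is $c_k\probit_k$. Define $k_0$ as the first iteration with $c_k\probit_k\ge 1$. Before $k_0$, the sum $\sum_k c_k \probit_k$ is $O(1)$ because $\probit_k$ grows geometrically and $c_k$ does not increase. From $k_0$ on, each iteration dominates $u$ with constant probability $1-e^{-c_k\probit_k}$, while the expected number of joiners is $c_k\probit_k$. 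Since $c_k$ is nonincreasing and $\probit_k$ doubles, the ratio of expected joiners to the probability of stopping stays controlled, and the tail sum yields $O(1)$ expected joiners per $u$. This step is the main obstacle. It is also where the monotonicity hypothesis is indispensable: the approximate rule may make a vertex eligible then ineligible, and only the nonincreasing count allows the geometric argument to go through. I would handle it by conditioning on the history and using $c_{k+1}\le c_k$ to dominate the sum by a geometric series.

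\emph{Stages $i\ge h$ and the final sum.} For stages $i\ge h$ the exact test is used, so that part is literally the original analysis with $a=b=1$; only the boundary invariant at stage $h$ uses $b$. Combining the pieces, the expected charge to $u$ is $O(\frac{b}{a})/\max_{w\in N^{+}(u)} d(w)$, where the max is taken at the start of the stage in which $u$ is dominated. Summing the standard greedy-dual argument over the $O(\log\Delta)$ degree scales then gives $\e[|D|]=O(\frac{b}{a}\log\Delta)\opt$.
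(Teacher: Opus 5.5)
Your overall architecture matches the paper's (the paper proves the general version, Lemma~\ref{lem:approx-elig}, by rerunning the proof of Theorem~\ref{thm:pq-basemds}). The shared ingredients are the stage-boundary invariant $d<bT_{i-1}=2bT_i$, which comes from the probability-$1$ last iteration; an $O(1)$ first-cover bound from the nonincreasing count; and the fact that every joiner has $d>aT_i$. The gap is in the step you flag as the main obstacle: the split at $k_0$ fails on both sides. With $c_k$ nonincreasing and $\probit_k$ doubling, the rates $r_k=c_k\probit_k$ are \emph{not} forced to grow geometrically; monotonicity only gives $r_k\le 2r_{k-1}$. If $c_k$ halves every iteration, then $r_k\equiv 1/2$, $k_0$ never occurs, and $\sum_{k<k_0}r_k=\Theta(\log\Delta)$. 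Your geometric-series bound would need $c_k$ \emph{nondecreasing}. Likewise, after $k_0$ the count can keep falling, so $r_k$ can drop back below $1$, and ``constant probability of domination per iteration'' is false. If you bound the pre-$k_0$ joiners by the unweighted sum, you lose a $\log\Delta$ factor in the approximation. Relatedly, the monotonicity hypothesis is needed to forbid a vertex becoming eligible \emph{again}; eligible-to-ineligible flips are harmless.

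The missing idea is to weight each iteration by the probability that $u$ survives to it. Treat the nonincreasing sequence $c_k$ as fixed by deferred decisions and set $R_k=\sum_{j\le k}r_j$; then $\e[X_u]\le\sum_k r_k e^{-R_{k-1}}$. Here $r_1\le \Delta\cdot 2/\Delta=2$, and for $k\ge 2$, $r_k e^{-R_{k-1}}\le 2r_{k-1}e^{-R_{k-1}}\le 2\int_{R_{k-2}}^{R_{k-1}}e^{-x}\,dx$, which telescopes, so $\e[X_u]\le 4$. This is how the nonincreasing count is really used: a large rate at iteration $k$ forces a large rate at $k-1$, which $u$ survives only with exponentially small probability.

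With that repaired, your dual-fitting finish ($\sum_u 1/\bar d(u)\le H_\Delta\cdot\opt$) is valid but more roundabout than necessary. The bound $\e[X_u]=O(1)$ holds in every stage with $u\in U_i$, not conditionally on $u$ being dominated there. So to localize the charge to $u$'s domination stage $i(u)$, you need an extra step such as $\sum_{i\le i(u)}1/T_i\le 2/T_{i(u)}$. The paper sidesteps this. The invariant gives $|U_i|\le 2bT_i\cdot\opt$ directly, since each optimal vertex has fewer than $2bT_i$ undominated vertices in its closed neighborhood. Then $aT_i|D_i|\le\sum_{u\in U_i}X_u$ gives $\e[|D_i|]\le 8(b/a)\cdot\opt$ per stage, summed over $\lceil\log\Delta\rceil$ stages.
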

 
Lemma~\ref{lem:est-decision-base} implies that we can apply the above lemma with $a=1-\gamma$, $b=1+\gamma$, and $h=i^\star$. 
Since $\gamma < 1/2$, we have $b/a = (1+\gamma)/(1-\gamma) < 3$, and so the approximation factor is $O(\log \Delta)$, matching the guarantee in Theorem \ref{thm:GrunauApproximation}.

\subsubsection{Proof of the main theorem}

\begin{theorem}\label{thm:base-awake-formal}
Algorithm~\ref{alg:base-awake} computes an $O(\log\Delta)$-approximate dominating set in expectation in $O(\log^2\Delta)$ rounds, with each node awake in at most $O(\log\Delta\log\log n)$ rounds.
\end{theorem}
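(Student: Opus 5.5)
The proof will assemble three ingredients: round complexity, awake complexity, and approximation. Each follows from an earlier lemma, and the only delicate point is handling the low-probability failure of the estimation guarantee.

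\emph{Round complexity.} Phase~1 runs $i^\star-1\le\lceil\log\Delta\rceil$ stages. Each stage has $2$ status-exchange rounds plus $\lceil\log\Delta\rceil$ iterations of $2$ rounds each, so Phase~1 takes $O(\log^2\Delta)$ rounds. Phase~2 runs stages $i^\star,\dots,\lceil\log\Delta\rceil$ of \textsc{BaseMDS}, at most $\lceil\log\Delta\rceil$ stages of $O(\log\Delta)$ rounds each, which is again $O(\log^2\Delta)$. This bound is deterministic, since the schedule is fixed.

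\emph{Awake complexity.} By Lemma~\ref{lem:phase1-awake-base}, each node is awake for $O(\log\Delta)$ rounds in Phase~1. In Phase~2 every node is awake in every round. I will therefore bound the number of Phase~2 stages. Since $T_i=\Delta/2^i$ and $i^\star$ is the first $i$ with $T_i\le C\lceil\log\Delta\rceil^2\log n$, we have $\Delta/2^{i^\star-1}>C\lceil\log\Delta\rceil^2\log n$. Hence
\[
\lceil\log\Delta\rceil-i^\star+1\le \log\bigl(C\lceil\log\Delta\rceil^2\log n\bigr)+O(1)=O(\log\log\Delta+\log\log n)=O(\log\log n),
\]
using $\Delta\le n$. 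Phase~2 thus lasts $O(\log\log n)\cdot O(\log\Delta)$ rounds. Adding Phase~1 gives $O(\log\Delta\log\log n)$ awake rounds, deterministically.

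\emph{Approximation.} Let $\mathcal{G}$ be the good event of Lemma~\ref{lem:est-decision-base}, taken with a fixed $\gamma\in(0,1/2)$, say $\gamma=1/4$, and with $c_0=2$. On $\mathcal{G}$, the Phase~1 join rule satisfies the two-sided eligibility condition of Lemma~\ref{lem:approx-elig-base} with $a=1-\gamma$, $b=1+\gamma$ and $h=i^\star$. By Lemma~\ref{lem:first-success-sampling-base}, waking only in iteration $J_{v,i}$ is equivalent to testing in every iteration, and the VBT guarantee (Lemma~\ref{lem:vbt-base}) ensures that estimators report up-to-date statuses. Lemma~\ref{lem:approx-elig-base} then gives an $O(\log\Delta)$ expected approximation.

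The main obstacle is that Lemma~\ref{lem:approx-elig-base} is stated for a rule that always satisfies the condition, whereas our rule does so only on $\mathcal{G}$. It also asks that the count of eligible neighbors be nonincreasing in $k$, and with estimates this need not hold. I would handle both issues by coupling. Define a modified algorithm that uses our estimates on $\mathcal{G}$ and, off $\mathcal{G}$, uses an exact-threshold rule, for which the hypothesis holds trivially. The monotonicity requirement is satisfied because, by first-success sampling, each candidate is tested only once per stage, so a vertex that becomes ineligible never becomes eligible again within the stage. The modified algorithm coincides with ours on $\mathcal{G}$, and it is $O(\log\Delta)$-approximate in expectation.

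It remains to bound the contribution of the complement of $\mathcal{G}$. On $\overline{\mathcal{G}}$ we always have $|D|\le n$, so
\[
\mathbb{E}[|D|]\le \mathbb{E}[|D'|]+n\cdot\Pr[\overline{\mathcal{G}}]\le O(\log\Delta)\,\opt+n\cdot n^{-2},
\]
where $D'$ is the output of the modified algorithm. Since $\opt\ge1$, the error term $n^{-1}$ is absorbed, and the result is $O(\log\Delta)\cdot\opt$ in expectation.
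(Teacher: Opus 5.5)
Your decomposition is the paper's. Rounds come from the stage and iteration counts. Awake rounds come from Lemma~\ref{lem:phase1-awake-base} plus an $O(\log\log n)$ bound on the number of Phase~2 stages, which you derive from the definition of $i^\star$ where the paper's proof only asserts it. The approximation comes from Lemmas~\ref{lem:est-decision-base} and~\ref{lem:approx-elig-base} with $c_0=2$, plus an $n\cdot n^{-2}$ charge for the bad event. Your inequality $\mathbb{E}[|D|]\le\mathbb{E}[|D'|]+n\Pr[\overline{\mathcal{G}}]$ is a cleaner version of the paper's last sentence, provided $D'$ is produced by a rule determined by the history. Switching to the exact test on the global event $\overline{\mathcal{G}}$ does not qualify. $\mathcal{G}$ depends on estimates in later iterations, so the decision at iteration $k$ would depend on future randomness, and the deferred-decisions argument behind Lemma~\ref{lem:approx-elig-base} no longer applies. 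Switch per triple instead: at $(v,i,k)$, use the estimate-based decision if it satisfies the two-sided condition at that triple, and the exact test otherwise. This rule always satisfies the condition and agrees with the real run on $\mathcal{G}$.

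The step that does not go through is your verification of the monotonicity hypothesis. Testing each candidate once makes a tested-and-rejected candidate permanently ineligible, but it says nothing about candidates not yet tested. Whether such a $v$ (with $J_{v,i}\ge k$) would join if its coin succeeded at iteration $k$ is decided by its estimate at iteration $k$. Estimates in different iterations come from disjoint, freshly drawn sets of estimators. So when $d_i^{k-1}(v)$ and $d_i^{k}(v)$ both lie in $((1-\gamma)T_i,(1+\gamma)T_i)$, Lemma~\ref{lem:est-decision-base} allows rejection at iteration $k-1$ and acceptance at iteration $k$.

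This leaves two options, and neither works. If eligible vertices must really join with probability $p_k$ (which the $e^{-R_{k-1}}$ bound on $u$ still being undominated requires), the eligible count around $u$ can increase. Then the step $r_k\le\iterbase\,r_{k-1}$ of the first-cover argument fails: many neighbors of $u$ that reject early and accept at an iteration with $p_k$ close to $1$ can join simultaneously. If instead every untested candidate is called eligible, the count is monotone. But then the lower half of the two-sided condition fails, and eligible vertices no longer join with probability $p_k$.

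For the same reason, Lemma~\ref{lem:first-success-sampling-base} cannot simply be quoted for the estimate-based rule. Its proof uses $d_i^J(v)<T_i\Rightarrow d_i^k(v)<T_i$ for $k>J$, which has no analogue for $\widehat d(v)$. The paper's own proof applies Lemma~\ref{lem:approx-elig-base} without checking this hypothesis at all. So you have found a real soft spot rather than missed an idea the paper supplies, but your patch does not close it. What is needed is either a first-cover bound proved directly for the one-shot estimate rule, or an estimation scheme whose accept decisions are monotone in $k$.
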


\begin{proof}
\textit{Rounds.} Phase~1 and Phase~2 together run $\lceil\log\Delta\rceil$ stages (split at $i^\star$), each with $\lceil\log\Delta\rceil$ iterations, each iteration executed in $O(1)$ rounds.
This is a total of $O(\log^2 \Delta)$ rounds.

\noindent
\textit{Awake rounds.} Phase~1 requires $O(\log\Delta)$ awake rounds per node according to Lemma~\ref{lem:phase1-awake-base}. Phase~2 is executed with no sleeping and it adds $O(\log\Delta\log\log n)$ awake rounds, for a total of $O(\log\Delta\log\log n)$ awake rounds.

\noindent
\textit{Approximation.} 
The \textsc{BaseMDS} provides an $O(\log \Delta)$ approximation in expectation (Theorem \ref{thm:GrunauApproximation}). Fixing $c_0 = 2$, Lemma \ref{lem:est-decision-base} and Lemma \ref{lem:approx-elig-base} together imply that with probability at least $1 - n^{-2}$, Algorithm \ref{alg:base-awake} provides an $O(\log \Delta)$ approximation in expectation, even though it uses residual degree estimates to decide which nodes are candidates in each stage. With probability at most $n^{-2}$, $|D|$ can be much larger than $\opt$, but since $|D| \le n$, the additional contribution of this rare, bad event to $\mathbb{E}[D]$ is just $o(1)$.
\end{proof}

\section{The \texorpdfstring{$(\stagebase,\iterbase)$}{(p,q)}-MDS algorithm}
\label{sec:basemds-qary}

We now present a simple and natural generalization of \textsc{BaseMDS}, that we call $(\stagebase,\iterbase)$-MDS.
This algorithm yields a tradeoff between the approximation factor and the round count: by Theorem~\ref{thm:pq-basemds}, $(\stagebase,\iterbase)$-MDS returns an $O(\stagebase\iterbase\log_\stagebase\Delta)$-approximate dominating set in $O(\log_\stagebase\Delta\cdot\log_\iterbase\Delta)$ rounds in expectation. 
Thus increasing either one or both of the two parameters $\stagebase$ and $\iterbase$ reduces the number of
rounds, while increasing the approximation factor.
This round-approximation tradeoff implies a tradeoff between the awake complexity and
approximation factor as well via the awake-efficient version, described in the next section.

In the $(\stagebase,\iterbase)$-MDS algorithm, $1<\stagebase\le\Delta$ is the ratio between consecutive thresholds and $1<\iterbase\le\Delta$ is the base of the inner-loop probabilities. The algorithm runs $\lceil\log_\stagebase\Delta\rceil$ outer stages with thresholds $T_i=\Delta/\stagebase^i$, and each stage has $\lceil\log_\iterbase\Delta\rceil$ iterations with probabilities $\probit_k=\min\{1,\iterbase^k/\Delta\}$; setting $\stagebase=\iterbase=2$ recovers \textsc{BaseMDS}. Algorithm~\ref{alg:pq-mds} presents the pseudocode.

\begin{algorithm}[t]
\caption{$(\stagebase,\iterbase)$-MDS}
\label{alg:pq-mds}
\begin{algorithmic}[1]
\Require parameters $1<\stagebase,\iterbase\le\Delta$; maximum closed-neighborhood size $\Delta$ known to all nodes.
\State $D\gets\emptyset$.
\For{$i=1,2,\dots,\lceil\log_\stagebase\Delta\rceil$} \Comment{stages}
\For{$k=1,2,\dots,\lceil\log_\iterbase\Delta\rceil$} \Comment{iterations}
    \For{each node $v$ in parallel}
      \State $d_i^k(v)\gets|\{u\in N^{+}(v):u\notin N^{+}(D)\}|$.
      \If{$v\notin D$ and $d_i^k(v)\ge T_i$} \Comment{$T_i = \Delta/\stagebase^i$}
        \State add $v$ to $D$ with probability $\probit_k = \min\{1,\iterbase^k/\Delta\}$.
      \EndIf
    \EndFor
  \EndFor
\EndFor
\State \Return $D$.
\end{algorithmic}
\end{algorithm}

The analysis below generalizes the approximation analysis of~\cite[Lemmas~3.2--3.3 and Theorem~3.1]{grunau2020improved}.
In the \textsc{BaseMDS} algorithm of~\cite{grunau2020improved},
Lemma~3.3 bounds by a constant the expected number of neighbors of any node $v$ added to the dominating set in the stage in which the element is first covered.
Lemma~3.2 uses this bound to then bound the expected number of nodes added to the dominating set in the first stage,
and Theorem~3.1 applies this argument to the residual instance at each stage.
We use the same charging structure, but extend it to arbitrary $1<\stagebase,\iterbase\le\Delta$. 
We obtain a bound of $O(\iterbase)$ on the expected number of nodes added to the dominating set in the first stage.
Using this, we show that each stage contributes $O(\stagebase\iterbase)\cdot\opt$ nodes in expectation to the dominating set, and summing over $\lceil\log_\stagebase\Delta\rceil$ stages, gives the $O(\stagebase\iterbase\log_\stagebase\Delta)$ approximation bound.
This argument leads to the following theorem.

\medskip
\noindent\textbf{Theorem~\ref{thm:pq-basemds}} (restated)\textbf{.} \textit{For every $1<\stagebase,\iterbase\le\Delta$, there is a randomized \congest{} algorithm that returns an $O(\stagebase\cdot \iterbase\cdot \log_\stagebase\Delta)$-approximate dominating set in expectation, in $O(\log_\stagebase\Delta\cdot\log_\iterbase\Delta)$ rounds.}

\begin{proof}
The round complexity is immediate: $\lceil\log_\stagebase\Delta\rceil$ outer stages, each with $\lceil\log_\iterbase\Delta\rceil$ iterations, each implementable in $O(1)$ \congest{} rounds. We prove the approximation. Let $U_i$ be the set of undominated vertices at the beginning of stage $i$ and $D_i$ the set of vertices selected during stage $i$.

\textit{Residual degree invariant.} At the beginning of every stage $i$, every vertex's residual degree has size at most $\stagebase T_i$. For $i=1$, this follows from $|N^{+}(v)|\le\Delta=\stagebase T_1$. For $i>1$, the final iteration of stage $i-1$ has $\probit_{\lceil\log_\iterbase\Delta\rceil}=1$, so any vertex not in $D$ with residual degree at least $T_{i-1}=\stagebase T_i$ is eligible at that iteration and joins $D$ with probability one. Since an optimum dominating set covers $U_i$ and each of its vertices covers at most $\stagebase T_i$ vertices of $U_i$,
\[
   |U_i|\le \stagebase T_i\cdot\opt.
\]

\textit{First-cover charge.} Fix a stage $i$ and $u\in U_i$. If $u$ is not dominated during stage $i$, set $X_u=0$; otherwise, with $k$ the first iteration in which $u$ becomes dominated, let $X_u$ be the number of vertices selected at iteration $k$ whose closed neighborhoods contain $u$. While $u$ is undominated, the count of eligible vertices in $N^{+}(u)\setminus D$ is nonincreasing in $k$ (no neighbor of $u$ has joined $D$ and residual degrees only decrease). By deferred decisions~\cite[\S1.3]{mitzenmacher2005probability}, it suffices to consider a fixed nonincreasing sequence $n_1\ge n_2\ge\cdots\ge 0$, where $n_k$ is the number of eligible vertices in $N^{+}(u)$ just before iteration $k$. Put $r_k=n_k\,\probit_k$ and $R_k=\sum_{j\le k}r_j$. Then $\e[X_u]\le\sum_k r_k\,e^{-R_{k-1}}$, where $e^{-R_{k-1}}$ upper-bounds the probability that no eligible neighbor was selected at an earlier iteration. Since $r_1\le n_1\,\iterbase/\Delta\le \iterbase$ and $r_k\le \iterbase\,r_{k-1}$ for $k\ge 2$,
\[
   \sum_k r_k\,e^{-R_{k-1}}\le \iterbase+\iterbase\sum_{k\ge 2}r_{k-1}\,e^{-R_{k-1}}\le \iterbase+\iterbase\sum_{k\ge 2}\int_{R_{k-2}}^{R_{k-1}}e^{-x}\,dx\le 2\iterbase,
\]
using $r\,e^{-(A+r)}\le\int_A^{A+r}e^{-x}\,dx$. Therefore $\e[X_u]\le 2\iterbase$.

\textit{Stage bound.} Every vertex selected at stage $i$ has residual degree at least $T_i$ at the moment of selection (it is eligible there), so $T_i\,|D_i|\le\sum_{u\in U_i}X_u$. Taking expectations,
\[
   \e[|D_i|\mid U_i]\le \frac{2\iterbase\,|U_i|}{T_i}\le 2\stagebase\iterbase\cdot\opt.
\]
Summing over the $\lceil\log_\stagebase\Delta\rceil$ stages gives $\e[|D|]=O(\stagebase\iterbase\log_\stagebase\Delta)\cdot\opt$.

Finally, the output is a dominating set: $T_{\lceil\log_\stagebase\Delta\rceil}\le 1$ and $\probit_{\lceil\log_\iterbase\Delta\rceil}=1$; any vertex still undominated before the last iteration has residual degree at least $1\ge T_{\lceil\log_\stagebase\Delta\rceil}$, hence is eligible and joins with probability one.
\end{proof}

\section{Awake-efficient \texorpdfstring{$(\stagebase,\iterbase_1,\iterbase_2)$}{(p,q1,q2)}-MDS}
\label{sec:mds-awake-pq}
\label{sec:mds-energy}

This section presents an awake-efficient implementation of the $(\stagebase,\iterbase)$-MDS algorithm of Section~\ref{sec:basemds-qary}.
Transforming $(\stagebase,\iterbase)$-MDS into an awake efficient algorithm follows along the lines of transforming \textsc{BaseMDS} into \textsc{BaseMDS-Awake} (Algorithm \ref{alg:base-awake}), with one important difference.
As in \textsc{BaseMDS-Awake}, we use two phases, with nodes sleeping a lot in Phase 1, but staying awake fully in Phase 2.
For the stages in Phase 1, we invove $(\stagebase,\iterbase_1)$-MDS, whereas for Phase 2 we invoke 
$(\stagebase,\iterbase_2)$-MDS, with different values of $\iterbase_1$ and $\iterbase_2$.
In fact, settings that lead to best tradeoffs between awake complexity and approximation 
include $\iterbase_1 = O(1)$ and $\iterbase_2 = \log \Delta/\log\log n$.
We call this the $(\stagebase, \iterbase_1, \iterbase_2)$-\textsc{MDS-Awake} algorithm; its pseudocode is given in Algorithm~\ref{alg:mds-energy}.

\begin{algorithm}[t]
\caption{$(\stagebase,\iterbase_1,\iterbase_2)$-\textsc{MDS-Awake}}
\label{alg:mds-energy}
\begin{algorithmic}[1]
\Require Parameters $1<\stagebase,\iterbase_1,\iterbase_2\le\Delta$; 
$\Delta$ known to all nodes; constant $C$ from Lemma~\ref{lem:est-decision-base}; wake schedules $\{S_k\}_{k=1}^{\lceil\log_{\iterbase_1}\Delta\rceil}$ from Lemma~\ref{lem:vbt-base}.
\vspace{1ex}
\State $D\gets\emptyset$;\, $i^\star\gets\min\{i:T_i\le C\lceil\log_\stagebase\Delta\rceil\lceil\log_{\iterbase_1}\Delta\rceil\log n\}$\, \Comment{$T_i=\Delta/\stagebase^i$}
\State Each $v$ draws $(I_v,K_v)\sim\mathrm{Unif}([\lceil\log_\stagebase\Delta\rceil]\times[\lceil\log_{\iterbase_1}\Delta\rceil])$.
\vspace{1ex}
\Statex \textbf{Phase~1: sampling-based estimation.}
\For{$i=1,\dots,i^\star-1$} \Comment{stages}
  \State \parbox[t]{\linewidth-\algorithmicindent}{\textbf{Round 1}: every node wakes; every $v\in D$ broadcasts this fact, and any node hearing such a broadcast from a neighbor marks itself dominated.}
  \State \parbox[t]{\linewidth-\algorithmicindent}{\textbf{Round 2}: every node wakes; every undominated node broadcasts this status, and each $v$ uses this information to compute $d_i^1(v)=|N^{+}(v)\setminus N^{+}(D)|$.}
  \State \parbox[t]{\linewidth-\algorithmicindent}{Each $v$ for which $d_i^1(v)\ge T_i$ is a candidate.}
  \State \parbox[t]{\linewidth-\algorithmicindent}{Each candidate $v$ computes first-success iteration $J_{v,i}\in[\lceil\log_{\iterbase_1}\Delta\rceil]$ by tossing a sequence of biased coins with $\probit_k=\min\{1,\iterbase_1^k/\Delta\}$.
  \Comment{Lemma~\ref{lem:first-success-sampling-base}}}
  \For{$k=1,\dots,\lceil\log_{\iterbase_1}\Delta\rceil$} \Comment{iterations}
    \State Every candidate $v$ with $J_{v,i}=k$ wakes up.
    \State Every candidate $w\in D$ that joined earlier in stage $i$ wakes up if $k\in S_{J_{w,i}}$.
    \State Every estimator $u$ with $I_u=i$ wakes up if $k\in S_{K_u}$ and $k\le K_u$.
    \State \parbox[t]{\linewidth-\algorithmicindent-\algorithmicindent}{\textbf{Round 1}: every awake candidate in $D$ informs neighbors that it is in $D$; any awake node hearing this marks itself dominated.}
    \State \parbox[t]{\linewidth-\algorithmicindent-\algorithmicindent}{\textbf{Round 2}: every $u$ with $(I_u,K_u)=(i,k)$ that was undominated at the start of stage $i$ informs all neighbors of its status.}
    \State \parbox[t]{\linewidth-\algorithmicindent-\algorithmicindent}{Each candidate $v$ computes its estimate $\widehat d(v)$ from the received statuses; if $\widehat d(v)\ge T_i$, $v$ joins $D$, and schedules wake-ups at $\{\ell\in S_k:\ell>k\}$.}
  \EndFor
\EndFor
\Statex \textbf{Phase~2: exact cleanup.}
\State Run stages $i^\star,\dots,\lceil\log_\stagebase\Delta\rceil$ of $(\stagebase,\iterbase_2)$-MDS (Algorithm~\ref{alg:pq-mds}) exactly on the residual instance, with every node awake.
\State \Return $D$.
\end{algorithmic}
\end{algorithm}

The helper lemmas for \textsc{BaseMDS-Awake} translate to $(\iterbase_1, \iterbase_2)$-\textsc{MDS-Awake} in a straightforward manner.
The first-success sampling lemma
(Lemma~\ref{lem:first-success-sampling-base}) extends directly to Algorithm~\ref{alg:mds-energy}, 
with $\lceil\log_{\iterbase_1}\Delta\rceil$ Phase~1 iterations and probabilities $\probit_k=\min\{1,\iterbase_1^k/\Delta\}$.
The next lemma is the parameterized counterpart of
Lemma~\ref{lem:est-decision-base}.

\begin{lemma}[Estimator decision]
\label{lem:est-decision}
Fix $1<\stagebase,\iterbase_1\le\Delta$, $\gamma\in(0,1/2)$, and $c_0\ge 1$. For a sufficiently large constant $C=C(\gamma,c_0)$, with probability at least $1-n^{-c_0}$, 
the following holds simultaneously for every Phase~1 candidate $v$, every stage $i<i^\star$ in which it is a candidate, and every iteration $k\in[\lceil\log_{\iterbase_1}\Delta\rceil]$:
\[
  d_i^k(v)\ge(1+\gamma)T_i
  \;\Longrightarrow\;
  \widehat d(v)\ge T_i,\qquad
  d_i^k(v)\le(1-\gamma)T_i
  \;\Longrightarrow\;
  \widehat d(v)<T_i,
\]
where $\widehat d(v)$ denotes the estimate that $v$ would compute in iteration $k$.
\end{lemma}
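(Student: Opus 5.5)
The plan is to repeat the proof of Lemma~\ref{lem:est-decision-base} almost verbatim. The only change is to replace the square grid of $L^2$ stage--iteration pairs with the rectangular grid of $N = L_p L_q$ pairs, where $L_p=\lceil\log_\stagebase\Delta\rceil$ and $L_q=\lceil\log_{\iterbase_1}\Delta\rceil$. The threshold $i^\star$ in Algorithm~\ref{alg:mds-energy} is defined precisely so that every Phase~1 stage $i<i^\star$ has $T_i > C L_p L_q \log n = C N\log n$. This is exactly the quantity that replaced $CL^2\log n$ in the earlier proof.

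Fix a triple $(v,i,k)$ with $v$ a candidate in stage $i<i^\star$. Let $A\subseteq N^+(v)$ be the nodes undominated at the start of stage $i$, so $|A|=d_i^1(v)\ge T_i$. Let $B\subseteq A$ be the nodes still undominated just before iteration $k$, so $|B|=d_i^k(v)$.

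As before, I view the uniform assignment $(I_u,K_u)$ as generated sequentially over the lexicographically ordered $N$ pairs. At pair $(i,k)$, with $M$ pairs remaining, each still-unassigned node is assigned with probability $1/M$. Let $A'$ and $B'$ be the nodes of $A$ and $B$ not yet assigned. Then $\e|A'|=|A|M/N$ and $\e|B'|=|B|M/N$. Conditioned on the past, $X$ and $Y$ are sums of independent Bernoulli$(1/M)$ indicators over $A'$ and $B'$.

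The concentration chain is the same as before. Since $\e X = |A|/N \ge T_i/N > C\log n$, Chernoff bounds give $X\in(1\pm\gamma/4)|A|/N$ with probability $1-n^{-(c_0+4)}$. When $|B|\ge\gamma T_i/8$, the same holds for $Y$ around $|B|/N$. The case analysis then goes through unchanged. If $|B|\ge(1+\gamma)T_i$, the ratio bound gives $\widehat d(v)\ge T_i$. If $|B|\le(1-\gamma)T_i$ and $|B|\ge \gamma T_i/8$, the symmetric bound gives $\widehat d(v)<T_i$. If $|B|<\gamma T_i/8$, a one-sided Chernoff bound gives $Y\le(\gamma/4)T_i/N$, hence $\widehat d(v)\le \gamma T_i/(4(1-\gamma/4))<T_i$.

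One subtlety needs checking. The sets $B$ and $A'$ depend on earlier randomness, namely candidates' coins and earlier estimator reports. This is handled as before by conditioning on the history up to pair $(i,k)$: the fresh choices at $(i,k)$ remain independent Bernoulli$(1/M)$. One also needs $|A'|$ concentrated, which follows because assignment of $A$ to past pairs is independent of $A$ itself, since $A$ is determined by the start of the stage.

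Finally, I take a union bound over at most $n\cdot N\le n\log^2\Delta\le n^3$ triples, which gives overall failure probability at most $n^{-c_0}$. The constant $C$ depends only on $\gamma$ and $c_0$, because every Chernoff exponent has the form $\Theta(\gamma^2 C\log n)$. I expect no real obstacle. The only point needing care is checking that the product $L_pL_q$ normalizes every expectation, so that the parameters $\stagebase$ and $\iterbase_1$ never enter the constant; this holds because the definition of $i^\star$ absorbs exactly that product.
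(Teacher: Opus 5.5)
Your proposal is correct at the paper's level of rigor and is essentially the paper's own proof: the paper likewise reruns the proof of Lemma~\ref{lem:est-decision-base} with $L^2$ replaced by $N=\lceil\log_\stagebase\Delta\rceil\lceil\log_{\iterbase_1}\Delta\rceil$, uses the definition of $i^\star$ to get $T_i>CN\log n$ so that every Chernoff exponent is $\Theta(\gamma^2 C\log n)$, repeats the same three-case analysis, and finishes with a union bound over triples $(v,i,k)$.

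Two small points, neither a departure from the paper. First, your count $N\le\log^2\Delta$ fails when $\stagebase$ or $\iterbase_1$ is below $2$; the union bound still works, because a nonempty Phase~1 forces $CN\log n<T_1<\Delta\le n$, so there are fewer than $n^2$ triples. Second, your claim that $A$ is independent of the assignments to past pairs is not literally true. $A$ is fixed only at the start of stage $i$, so it can depend on which nodes reported as estimators in earlier stages, and likewise $B$ can depend on the pairs $(i,1),\dots,(i,k-1)$. The paper's proof, however, treats $A$ and $B$ as fixed sets in exactly the same way.
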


\begin{proof}
The proof of Lemma~\ref{lem:est-decision-base} applies with $L^2$ replaced by $\lceil\log_{\stagebase}\Delta\rceil\lceil\log_{\iterbase_1}\Delta\rceil$, the number of stage--iteration pairs.

Fix a Phase~1 candidate $v$, a stage $i<i^\star$, and an iteration $k$. Let $A$ be the nodes of $N^+(v)$ that are undominated at the beginning of stage $i$, and let $B\subseteq A$ be those still undominated immediately before iteration $k$, as in the proof of Lemma~\ref{lem:est-decision-base}. Thus $|A|=d_i^1(v)$ and $|B|=d_i^k(v)$. Since each node chooses its estimator pair uniformly from $[\lceil\log_{\stagebase}\Delta\rceil]\times[\lceil\log_{\iterbase_1}\Delta\rceil]$, the sequential exposure argument in that proof applies unchanged after replacing $L^2$ by $\lceil\log_{\stagebase}\Delta\rceil\lceil\log_{\iterbase_1}\Delta\rceil$.

Since $v$ is a candidate, $|A|\ge T_i$, and since $i<i^\star$, $T_i>C\lceil\log_{\stagebase}\Delta\rceil\lceil\log_{\iterbase_1}\Delta\rceil\log n$. Hence the same Chernoff-bound argument shows that, with failure probability at most $n^{-(c_0+3)}$, $X\in(1\pm\gamma/4)|A|/(\lceil\log_{\stagebase}\Delta\rceil\lceil\log_{\iterbase_1}\Delta\rceil)$ and, whenever $|B|\ge\gamma T_i/8$, $Y\in(1\pm\gamma/4)|B|/(\lceil\log_{\stagebase}\Delta\rceil\lceil\log_{\iterbase_1}\Delta\rceil)$. Therefore, exactly as in Lemma~\ref{lem:est-decision-base}, $|B|\ge(1+\gamma)T_i$ implies $\widehat d(v)\ge T_i$, while $|B|\le(1-\gamma)T_i$ and $|B|\ge\gamma T_i/8$ imply $\widehat d(v)<T_i$.

It remains to consider $|B|<\gamma T_i/8$. In this case, $\mathbb{E}[Y]=|B|/(\lceil\log_{\stagebase}\Delta\rceil\lceil\log_{\iterbase_1}\Delta\rceil)<\gamma T_i/(8\lceil\log_{\stagebase}\Delta\rceil\lceil\log_{\iterbase_1}\Delta\rceil)$. The same one-sided Chernoff bound as in Lemma~\ref{lem:est-decision-base}, together with the lower bound on $X$, gives $\widehat d(v)\le\gamma T_i/(4(1-\gamma/4))<T_i$, where the last inequality uses $\gamma<1/2$.

Thus both implications hold for every fixed triple $(v,i,k)$ with failure probability at most $n^{-(c_0+3)}$. For fixed $\stagebase,\iterbase_1>1$, there are at most $n\lceil\log_{\stagebase}\Delta\rceil\lceil\log_{\iterbase_1}\Delta\rceil=O(n\log^2 n)$ such triples. A union bound therefore gives the claimed probability of at least $1-n^{-c_0}$, for a sufficiently large $C=C(\gamma,c_0)$.
\end{proof}

The approximate-eligibility lemma is restated below for general parameters as Lemma~\ref{lem:approx-elig}.

\begin{lemma}[Approximate eligibility]
\label{lem:approx-elig}
Fix $1<\stagebase,\iterbase_1,\iterbase_2\le\Delta$, constants $0<a<1<b$, and a cutoff stage $h\in\{1,\dots,\lceil\log_\stagebase\Delta\rceil\}$. Suppose we invoke $(\stagebase,\iterbase_1)$-MDS in stages $i<h$ and $(\stagebase,\iterbase_2)$-MDS in stages $i\ge h$. For stages $i<h$, the exact eligibility test $d_i^k(v)\ge T_i$ is replaced by any rule satisfying
\[
  d_i^k(v)\ge bT_i\;\Rightarrow\;v\text{ is a candidate},\qquad d_i^k(v)\le aT_i\;\Rightarrow\;v\text{ is not a candidate},
\]
with the count of eligible vertices in $N^{+}(u)\setminus D$ nonincreasing in $k$ while $u$ remains undominated; stages $i\ge h$ use the exact rule. Then $D$ is an $O\!\bigl(\iterbase_1\stagebase h+\iterbase_2\stagebase(\log_\stagebase\Delta-h)\bigr)$-approximate dominating set in expectation.
\end{lemma}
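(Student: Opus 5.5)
The plan is to rerun the stage-by-stage charging argument from the proof of Theorem~\ref{thm:pq-basemds}. Stages $i<h$ are treated with the approximate rule and base $\iterbase_1$; stages $i\ge h$ use base $\iterbase_2$ and the exact rule. We then sum the per-stage bounds. As before, $U_i$ is the set of vertices undominated at the start of stage $i$, and $D_i$ is the set of vertices added during stage $i$. The claimed bound follows once we show that $\e[|D_i|\mid U_i]=O(\stagebase\iterbase_1)\opt$ for $i<h$ and $O(\stagebase\iterbase_2)\opt$ for $i\ge h$. The constant factor $b/a$ is absorbed into the $O(\cdot)$, since $a,b$ are constants.

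\emph{Step 1: a residual-degree invariant.} At the start of stage $i$, every vertex has residual degree less than roughly $b\stagebase T_i$.

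For $i=1$ this is trivial. For $i>1$, take the last iteration of stage $i-1$, where $\probit=1$. In that iteration every vertex with residual degree at least $bT_{i-1}=b\stagebase T_i$ is forced to be a candidate, so it joins $D$ with probability one. (If stage $i-1\ge h$, the exact rule gives the stronger bound $\stagebase T_i$.) The optimum covers $U_i$, and each of its vertices covers at most $b\stagebase T_i$ vertices of $U_i$. Hence $|U_i|\le b\stagebase T_i\cdot\opt$.

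\emph{Step 2: the first-cover charge.} Fix $u\in U_i$ and define $X_u$ as in the earlier proof. The lemma assumes that the number of eligible vertices in $N^{+}(u)\setminus D$ is nonincreasing in $k$ while $u$ is undominated. This is exactly the property the deferred-decisions argument needs, so that computation transfers verbatim: the sequence $r_k=n_k\probit_k$ satisfies $r_1\le\iterbase$ and $r_k\le\iterbase r_{k-1}$, with $\iterbase$ the base used in that stage. This gives $\e[X_u]\le 2\iterbase_1$ for $i<h$ and $\e[X_u]\le 2\iterbase_2$ for $i\ge h$.

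One point needs checking: the bound $r_1\le n_1\iterbase/\Delta\le\iterbase$ uses $n_1\le\Delta$. This holds for any eligibility rule, since $|N^{+}(u)|\le\Delta$.

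\emph{Step 3: converting charge into selections.} Every vertex selected in stage $i<h$ was a candidate at the moment of selection. By the second implication of the rule, its true residual degree then exceeds $aT_i$. Therefore $aT_i|D_i|\le\sum_{u\in U_i}X_u$, and
\[
\e[|D_i|\mid U_i]\le \frac{2\iterbase_1 b\stagebase T_i\,\opt}{aT_i}=O\!\left(\frac{b}{a}\stagebase\iterbase_1\right)\opt .
\]
For $i\ge h$, the same argument with $a=1$ gives $O(\stagebase\iterbase_2)\opt$. Summing $h$ stages of the first kind and $\lceil\log_\stagebase\Delta\rceil-h$ (plus a constant) stages of the second kind yields the claimed bound.

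Feasibility of the output follows exactly as before, because the last stage is exact with $T\le 1$ and $\probit=1$.

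\emph{Main obstacle.} The delicate step is Step 2 under an arbitrary (possibly randomized, estimate-based) eligibility rule. Deferred decisions requires that eligibility in iteration $k$ not depend on the coin of $v$ at iteration $k$, so that $X_u$ is still dominated by $\sum_k r_ke^{-R_{k-1}}$. I would handle this by conditioning on all estimator randomness, which is independent of the join coins. Under that conditioning, eligibility is a deterministic function of the history before iteration $k$. The fixed-sequence argument then applies pointwise and is averaged afterward.
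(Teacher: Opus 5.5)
Your proposal is correct and follows the paper's proof essentially step for step. Both proofs use the residual invariant $|U_i|\le b\stagebase T_i\cdot\opt$, obtained from the probability-one final iteration, and the unchanged first-cover bound $\e[X_u]\le 2\iterbase_1$ (resp.\ $2\iterbase_2$), which relies on the assumed monotonicity of eligible counts. They then apply the charging $aT_i|D_i|\le\sum_{u\in U_i}X_u$ and sum over stages. Your feasibility check and your conditioning on the estimator randomness are sound details that the paper leaves implicit.

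One caveat, shared with the paper's own count of $\lceil\log_\stagebase\Delta\rceil-h+1$ Phase~2 stages: your ``plus a constant'' is not literally absorbed by the stated term $\iterbase_2\stagebase(\log_\stagebase\Delta-h)$. This happens when $\log_\stagebase\Delta-h$ is not bounded away from zero (e.g.\ $h=\lceil\log_\stagebase\Delta\rceil$) and $\iterbase_2\gg\iterbase_1 h$. The statement should therefore carry $\log_\stagebase\Delta-h+1$; this is an imprecision in the statement, not in your argument.
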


\begin{proof}
The argument of Theorem~\ref{thm:pq-basemds} carries over with three modifications. 

\textit{Residual invariant.} At the start of stage $i$, any undominated vertex with residual degree at least $\stagebase T_i=T_{i-1}$ (when $i-1\ge h$) or at least $\stagebase bT_i=bT_{i-1}$ (when $i-1<h$) was eligible at the final probability-$1$ iteration of stage $i-1$ and joined $D$ with probability one. Hence, at the beginning of stage $i$, the number of undominated vertices is at most $\stagebase bT_i\cdot\opt$.

\textit{Per-stage first-cover bound.} The first-cover argument in the proof of Theorem~\ref{thm:pq-basemds} shows that, for any fixed vertex, the expected number of selected vertices whose closed neighborhoods contain it in the iteration when it is first dominated is at most $2\iterbase_1$ in a Phase~1 stage and at most $2\iterbase_2$ in a Phase~2 stage.

\textit{Stage bound.} Every vertex selected in stage $i$ has residual degree at least $aT_i$ at the moment of selection. Combining this fact with the residual and first-cover bounds above, the expected number of vertices selected in each Phase~1 stage is at most $2\iterbase_1\stagebase b/a\cdot\opt=O(\iterbase_1\stagebase)\cdot\opt$. The analogous bound with $\iterbase_2$ holds for each Phase~2 stage. Summing over the $h-1$ Phase~1 stages and the $\lceil\log_\stagebase\Delta\rceil-h+1$ Phase~2 stages gives the claimed bound.
\end{proof}

\noindent
Lemma~\ref{lem:phase1-awake} bounds the awake complexity of Phase 1 of Algorithm \ref{alg:mds-energy}.
It extends Lemma~\ref{lem:phase1-awake-base}
to the parameters $\stagebase,\iterbase_1,\iterbase_2$ by replacing $2$ throughout the proof with the appropriate parameter.
Lemma \ref{lem:homestage-cleanup} bounds the awake complexity of Phase 2.
\begin{lemma}\label{lem:phase1-awake}
In Algorithm~\ref{alg:mds-energy}, each node is awake for $O(\log_\stagebase\Delta + \log\log_{\iterbase_1}\Delta)$ rounds during Phase~1.
\end{lemma}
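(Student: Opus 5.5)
The plan is to mirror the proof of Lemma~\ref{lem:phase1-awake-base}: partition each node's Phase~1 awake rounds by the role that causes them. There are three roles: the per-stage status exchange, the candidate role, and the estimator role. We bound each separately and sum.

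\textbf{Status exchange.} Phase~1 consists of $i^\star-1\le\lceil\log_\stagebase\Delta\rceil$ stages. Each stage opens with exactly two rounds in which every node is awake (Rounds~1 and~2 before the iteration loop of Algorithm~\ref{alg:mds-energy}). This contributes $2\lceil\log_\stagebase\Delta\rceil=O(\log_\stagebase\Delta)$ awake rounds per node.

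\textbf{Candidate role.} Fix a stage $i$ in which $v$ is a candidate. Node $v$ computes its single first-success index $J_{v,i}$ at the start of the stage, using the coins with $\probit_k=\min\{1,\iterbase_1^k/\Delta\}$; this is justified by the first-success sampling lemma (Lemma~\ref{lem:first-success-sampling-base}) extended to $\iterbase_1$. It then wakes only in iteration $J_{v,i}$, costing two rounds. If it joins $D$, it additionally wakes in the iterations $\ell\in S_{J_{v,i}}$ with $\ell>J_{v,i}$. By Lemma~\ref{lem:vbt-base} with $T=\lceil\log_{\iterbase_1}\Delta\rceil$, we have $|S_k|\le\lceil\log T\rceil=O(\log\log_{\iterbase_1}\Delta)$, so this adds $O(\log\log_{\iterbase_1}\Delta)$ rounds.

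The key observation is that a node joins $D$ at most once over the whole algorithm, since a node in $D$ never again becomes a candidate. Hence the VBT overhead for joining is paid only once, not once per stage. Summed over stages, the candidate role costs $O(\log_\stagebase\Delta)+O(\log\log_{\iterbase_1}\Delta)$.

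\textbf{Estimator role.} Each node $u$ has exactly one assigned pair $(I_u,K_u)$. In stage $I_u$ it wakes only in iterations $\ell\in S_{K_u}$ with $\ell\le K_u$; by the convention $k\in S_k$, this set includes $K_u$ itself. Each such iteration costs two rounds, so the estimator role contributes $O(\log\log_{\iterbase_1}\Delta)$ rounds once overall.

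\textbf{Combining.} Adding the three contributions gives $O(\log_\stagebase\Delta+\log\log_{\iterbase_1}\Delta)$, as claimed. The only delicate point is the accounting in the candidate role. The VBT wake-ups must be charged to the single stage in which $v$ joins $D$, not to every stage in which $v$ is a candidate. Otherwise we would get $O(\log_\stagebase\Delta\cdot\log\log_{\iterbase_1}\Delta)$ and the additive form of the bound would be lost.

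We also confirm that no other wake-ups occur in Phase~1. The iteration loop activates exactly three kinds of nodes: candidates in their $J$-iteration, members of $D$ that joined earlier in the current stage (in their $S$-iterations), and estimators in their $S$-iterations. All three are covered above, and every other node sleeps for the remainder of the stage.
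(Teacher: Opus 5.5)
Your proposal is correct and follows essentially the same argument as the paper, which charges $O(1)$ wake-ups per Phase~1 stage (the status exchange plus the single first-success iteration), one $O(\log\lceil\log_{q_1}\Delta\rceil)$ block of virtual-binary-tree wake-ups for the single join, and one more such block for the unique estimator assignment. Your version is more explicit about why the join overhead is paid only once: after joining $D$, a node has $d_i^1=0$ and so is never a candidate again.
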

\begin{proof}
$O(1)$ candidate wake-ups per stage contribute $O(\lceil\log_\stagebase\Delta\rceil)$ across Phase~1; a single join contributes $O(\log\lceil\log_{\iterbase_1}\Delta\rceil)$ rebroadcast wake-ups in $S_{J_{v,i}}$; the estimator role contributes another $O(\log\lceil\log_{\iterbase_1}\Delta\rceil)$ wake-ups in $S_{K_u}$.
\end{proof}

\begin{lemma}\label{lem:homestage-cleanup} In Algorithm~\ref{alg:mds-energy}, Phase~2 executes the remaining $O(\log_\stagebase\log n)$ stages of the $(\stagebase,\iterbase_2)$-variant of \textsc{BaseMDS} exactly. It uses $O(\log_\stagebase\log n\cdot\log_{\iterbase_2}\Delta)$ rounds and the same number of awake rounds per node. \end{lemma}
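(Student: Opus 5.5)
The plan is to count stages and then multiply by the per-stage cost of running $(\stagebase,\iterbase_2)$-MDS with every node awake. First I would bound the number of Phase~2 stages. Phase~2 runs stages $i^\star,\dots,\lceil\log_\stagebase\Delta\rceil$. By the definition of $i^\star$ in Algorithm~\ref{alg:mds-energy}, it is the first stage with $T_{i^\star}=\Delta/\stagebase^{i^\star}\le \tau:=C\lceil\log_\stagebase\Delta\rceil\lceil\log_{\iterbase_1}\Delta\rceil\log n$. By minimality, $T_{i^\star-1}>\tau$, so $\Delta/\stagebase^{i^\star}>\tau/\stagebase$. This gives
\[
\lceil\log_\stagebase\Delta\rceil-i^\star+1\;\le\;\log_\stagebase\Delta-i^\star+2\;<\;\log_\stagebase\tau+3 .
\]
If no stage reaches the cutoff trivially, the count is even smaller.

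Next I would bound $\log_\stagebase\tau$. Since $\Delta\le n$, $\stagebase>1$ and $\iterbase_1>1$ are fixed, we have $\lceil\log_\stagebase\Delta\rceil,\lceil\log_{\iterbase_1}\Delta\rceil=O(\log n)$. Hence $\tau=O(\log^3 n)$ and $\log_\stagebase\tau=O(\log_\stagebase\log n)$. With the additive constant, the number of Phase~2 stages is $O(\log_\stagebase\log n+1)$. For $\stagebase$ as large as $\Delta$ the count is $O(1)$, and I would absorb this constant into the $O(\cdot)$ under the paper's convention, noting it as $O(1+\log_\stagebase\log n)$ if needed.

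Finally, each Phase~2 stage of Algorithm~\ref{alg:pq-mds} with parameter $\iterbase_2$ consists of $\lceil\log_{\iterbase_2}\Delta\rceil$ iterations. Each iteration takes $O(1)$ \congest{} rounds: one round announcing joins to $D$, one round broadcasting undominated status so that $d_i^k(v)$ is computed exactly, and a local coin toss. Multiplying gives $O(\log_\stagebase\log n\cdot\log_{\iterbase_2}\Delta)$ rounds. Since every node is awake throughout Phase~2, the awake count per node equals the round count.

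Exactness follows because Phase~2 uses true residual degrees on the residual instance left by Phase~1, exactly as Algorithm~\ref{alg:pq-mds} does. The only mildly delicate step is the stage count when $\stagebase$ is large, that is, controlling the ceiling and additive constants. The rest is bookkeeping.
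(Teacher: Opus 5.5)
Your plan is the paper's argument: bound the number of Phase~2 stages using the cutoff $\tau$, then multiply by $\lceil\log_{\iterbase_2}\Delta\rceil$ iterations of $O(1)$ fully-awake rounds each. However, the one nontrivial inequality is justified by the wrong half of the definition of $i^\star$. Minimality gives $T_{i^\star-1}>\tau$, i.e.\ $\log_\stagebase\Delta-i^\star>\log_\stagebase\tau-1$. That is an \emph{upper} bound on $i^\star$, hence a \emph{lower} bound on the number of Phase~2 stages. So it cannot give your claimed $\log_\stagebase\Delta-i^\star+2<\log_\stagebase\tau+3$, which requires $\log_\stagebase\Delta-i^\star<\log_\stagebase\tau+1$. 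The inequality is true, but it does not follow from what you cite.

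To upper-bound the stage count you need a lower bound on $i^\star$. That comes from the defining condition $T_{i^\star}=\Delta/\stagebase^{i^\star}\le\tau$, which gives $\log_\stagebase\Delta-i^\star\le\log_\stagebase\tau$ and hence $\lceil\log_\stagebase\Delta\rceil-i^\star+1\le\log_\stagebase\tau+2$. This is exactly the fact the paper's proof cites. It also covers $i^\star=1$ uniformly, so your hedge ``if no stage reaches the cutoff trivially'' becomes unnecessary.

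With that fix, the rest is fine. You bound $\log_\stagebase\tau=O(1+\log_\stagebase\log n)$ using $\Delta\le n$ and $\stagebase,\iterbase_1$ bounded away from $1$, an assumption the paper also makes tacitly. You also observe that the honest stage count is $O(1+\log_\stagebase\log n)$, not $O(\log_\stagebase\log n)$. Both points are more careful than the lemma as stated; the paper itself switches to $m_2=O(1+\log_\alpha\log n)$ when applying the lemma in Theorem~\ref{thm:homestage-mds-tradeoff-formal}.
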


\begin{proof}
$T_{i^\star}\le C\lceil\log_\stagebase\Delta\rceil\lceil\log_{\iterbase_1}\Delta\rceil\log n$ and $\Delta\le n$ give $\lceil\log_\stagebase\Delta\rceil-i^\star+1=O(\log_\stagebase\log n)$. Each remaining stage has $\lceil\log_{\iterbase_2}\Delta\rceil$ iterations executed in $O(1)$ rounds with every node awake.
\end{proof}

\subsection{Parameter choices}
\label{subsec:homestage-qary}

Lemmas \ref{lem:phase1-awake} and \ref{lem:homestage-cleanup} show that the awake complexity of $(\stagebase, \iterbase_1, \iterbase_2)$-MDS is
\begin{equation}
\label{eqn:generalAwake}
O(\log_\stagebase \Delta + \log\log_{\iterbase_1} \Delta + \log_{\stagebase}\log n\cdot\log_{\iterbase_2}\Delta).
\end{equation}
Using Theorem \ref{thm:pq-basemds}, we see that
Phase~1 contributes $O(\stagebase\iterbase_1\log_\stagebase\Delta)$ to the expected approximation, whereas Phase~2 contributes $O(\stagebase\iterbase_2\log_\stagebase\log n)$, for a total of
\begin{equation}
\label{eqn:generalApprox}
O(\stagebase\iterbase_1\log_\stagebase\Delta + \stagebase\iterbase_2\log_\stagebase\log n).
\end{equation}
Using the setting of $\stagebase = \alpha$, $\iterbase_1 = \iterbase_2 = 2$ in (\ref{eqn:generalAwake}) and (\ref{eqn:generalApprox}) and simplifying yields an awake complexity of $O(\log_{\alpha} \Delta \cdot \log\log n)$ and approximation factor of $O(\alpha\cdot\max(\log_\alpha \Delta, \log_\alpha\log n))$.

Now we note that the dominant term in the awake complexity is the last term ``$\log_{\stagebase}\log n\cdot\log_{\iterbase_2}\Delta$'' in (\ref{eqn:generalAwake}), which can be reduced by increasing $q_2$. 
Furthermore, this increase will not worsen the approximation factor when the second term ``$\stagebase\iterbase_2\log_\stagebase\log n$'' in (\ref{eqn:generalApprox}) is less than the first term. This motivates setting $q_2 = \max\{2, \log \Delta/\log\log n\}$, which equalizes the two terms, leading to the following theorem, which slightly improves the tradeoff between the awake complexity and approximation factor above.

\begin{theorem}\label{thm:homestage-mds-tradeoff-formal}
For any $1<\alpha\le\Delta$, there is an algorithm in the sleeping \congest{} model that computes an $O(\alpha\log\Delta)$-approximate dominating set in expectation in $O(\log\Delta\cdot\log_\alpha\Delta)$ rounds.
The awake complexity is $O((\log\log n)^2+(\log\log n)^3/\log\alpha)$ when $\log\log\Delta\le 2\log\log\log n$, and $O(\log\Delta/\log\log\Delta+ \log_\alpha\Delta(1+\log\log n/\log\log\Delta))$ when $\log\log\Delta>2\log\log\log n$.
In particular, the awake complexity is $O(\log n)$ for all $\Delta$.
Moreover, when $\Delta\ge 2^{\log^\varepsilon n}$ for any constant $\varepsilon\in(0,1)$, the awake complexity simplifies to $O(\log_\alpha\Delta+\log\Delta/\log\log\Delta)$.
\end{theorem}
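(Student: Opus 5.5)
We instantiate Algorithm~\ref{alg:mds-energy} with $\stagebase=\alpha$, $\iterbase_1=2$ and $\iterbase_2=\max\{2,\log\Delta/\log\log n\}$. We then read off the three quantities (approximation, rounds, awake rounds) from the lemmas already established. The two expressions \eqref{eqn:generalAwake} and \eqref{eqn:generalApprox} do essentially all the work, so the proof is mostly a careful case analysis of these expressions.

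\emph{Approximation.} Lemma~\ref{lem:est-decision} holds with $c_0=2$, so Lemma~\ref{lem:approx-elig} applies with $a=1-\gamma$, $b=1+\gamma$ and $h=i^\star$. Exactly as in the proof of Theorem~\ref{thm:base-awake-formal}, the failure event of probability $n^{-2}$ contributes only $o(1)$ to $\e[|D|]$. This gives the approximation factor
\[
O(\alpha\log_\alpha\Delta+\alpha\iterbase_2\log_\alpha\log n).
\]
The first term is $O(\alpha\log\Delta/\log\alpha)=O(\alpha\log\Delta)$, since $\log\alpha=\Omega(1)$ only when $\alpha\ge 2$. For $1<\alpha<2$ we instead use $\alpha\log_\alpha\Delta$ together with the convention that the ``$O(\alpha\log\Delta)$'' claim is stated for $\alpha$ bounded away from $1$; I would handle this by replacing $\alpha$ with $\max\{\alpha,2\}$, which changes nothing asymptotically. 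For the second term, $\iterbase_2\log_\alpha\log n\le(\log\Delta/\log\log n+2)\cdot\log\log n/\log\alpha=O(\log_\alpha\Delta+\log_\alpha\log n)$. The only problematic case is $\Delta$ tiny relative to $\log n$, where the term $\log_\alpha\log n$ could exceed $\log_\alpha\Delta$. I would handle it by noting that if $T_1\le C\log_\alpha\Delta\log\Delta\log n$, then Phase~1 is empty. The algorithm is then exactly $(\alpha,\iterbase_2)$-MDS on the whole instance, and Theorem~\ref{thm:pq-basemds} directly gives $O(\alpha\iterbase_2\log_\alpha\Delta)$. Since the number of Phase~2 stages is at most $\lceil\log_\alpha\Delta\rceil$ in every case, the Phase~2 contribution is in fact $O(\alpha\iterbase_2\min\{\log_\alpha\Delta,\log_\alpha\log n\})$. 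With $\iterbase_2=\max\{2,\log\Delta/\log\log n\}$, this is at most $O(\alpha\log\Delta/\log\alpha\cdot\ldots)$; a short check bounds it by $O(\alpha\log\Delta)$.

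\emph{Rounds.} Phase~1 has at most $\lceil\log_\alpha\Delta\rceil$ stages of $\lceil\log\Delta\rceil$ two-round iterations, giving $O(\log\Delta\log_\alpha\Delta)$ rounds. Phase~2 uses fewer iterations per stage, since $\iterbase_2\ge2$, so it adds no more than this.

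\emph{Awake complexity.} This is the main bookkeeping obstacle. We plug the parameters into \eqref{eqn:generalAwake}, again using $\min\{\log_\alpha\Delta,\log_\alpha\log n\}$ for the number of Phase~2 stages, to get
\[
O\bigl(\log_\alpha\Delta+\log\log\Delta+\log_\alpha\log n\cdot\log_{\iterbase_2}\Delta\bigr).
\]
We then split on whether $\iterbase_2=2$ or $\iterbase_2=\log\Delta/\log\log n$. The threshold $\log\log\Delta\le 2\log\log\log n$, i.e.\ $\log\Delta\le(\log\log n)^2$, is where $\log_{\iterbase_2}\Delta=\log\Delta/\log(\log\Delta/\log\log n)$ stops being $O(\log\Delta/\log\log\Delta)$.

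In the small case, we bound $\log_{\iterbase_2}\Delta\le\log\Delta\le(\log\log n)^2$ and $\log_\alpha\Delta\le(\log\log n)^2/\log\alpha$. Taking the minimum with the number of Phase~2 stages then yields $O((\log\log n)^2+(\log\log n)^3/\log\alpha)$.

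In the large case, $\log\iterbase_2\ge\log\log\Delta-\log\log\log n\ge\tfrac12\log\log\Delta$, so $\log_{\iterbase_2}\Delta=O(\log\Delta/\log\log\Delta)$. The Phase~2 term is then $O(\log\log n\cdot\log\Delta/(\log\alpha\log\log\Delta))=O(\log_\alpha\Delta\cdot\log\log n/\log\log\Delta)$. We absorb $\log\log\Delta$ into $\log\Delta/\log\log\Delta$, and the stated formula follows.

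For the corollaries, $\Delta\ge 2^{\log^\eps n}$ gives $\log\log\Delta=\Theta(\log\log n)$, so the factor $\log\log n/\log\log\Delta$ is $O(1)$. Every expression in both cases is at most $O(\log\Delta+(\log\log n)^3)\subseteq O(\log n)$.
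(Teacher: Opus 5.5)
Your instantiation ($\stagebase=\alpha$, $\iterbase_1=2$, $\iterbase_2=\max\{2,\log\Delta/\log\log n\}$), the reduction of $1<\alpha<2$ to $\alpha=2$, and the split at $\log\log\Delta=2\log\log\log n$ all match the paper. The gap is in your justification of the ``in particular $O(\log n)$'' claim: it is not true that every expression is $O(\log\Delta+(\log\log n)^3)$. In the second regime the bound contains the term $\log_\alpha\Delta\cdot\log\log n/\log\log\Delta$. Take $\alpha=2$ and $\log\Delta=(\log\log n)^3$, which lies in that regime. Then this term equals $(\log\log n)^4/(3\log\log\log n)=\omega((\log\log n)^3)$. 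The algorithm really pays this cost: there are $\Theta(\log\log n)$ fully awake Phase~2 stages, each with $\Theta((\log\log n)^3/\log\log\log n)$ iterations. The missing ingredient is that $x\mapsto x/\log x$ is increasing. From $\log\Delta\le\log n$ you get $(\log\Delta/\log\log\Delta)\cdot\log\log n\le\log n$. Together with $\log\alpha\ge1$, this bounds every term of the second-regime expression by $O(\log n)$. Your $(\log\log n)^3$ argument only covers the first regime.

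There is also a smaller inaccuracy. You count Phase~2 stages as $\min\{\log_\alpha\Delta,\log_\alpha\log n\}$, which you inherited from \eqref{eqn:generalAwake}. But $T_{\lceil\log_\alpha\Delta\rceil}\le1$ forces $i^\star\le\lceil\log_\alpha\Delta\rceil$, so Phase~2 always has $m_2\ge1$ stages, and the correct count is $m_2=O(1+\log_\alpha\log n)$. Your expression undercounts when $\alpha>\log n$. The extra terms are harmless: $\alpha\iterbase_2=O(\alpha\log\Delta)$ in the approximation, and $\log_{\iterbase_2}\Delta$ in the awake bound, which is $O((\log\log n)^2)$ in the first regime and $O(\log\Delta/\log\log\Delta)$ in the second. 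That second-regime term is exactly the source of the $\log\Delta/\log\log\Delta$ term in the stated bound. In your derivation it appears only as a place to absorb $\log\log\Delta$.

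Finally, the approximation part ends with an unfinished ``$\cdot\ldots$; a short check''. Replace it with the explicit case split. If $\iterbase_2=2$, use $m_2\le\lceil\log_\alpha\Delta\rceil$. Otherwise, $\iterbase_2m_2=O(\log\Delta/\log\log n+\log\Delta/\log\alpha)=O(\log\Delta)$.
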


\begin{proof}
For $1<\alpha<2$, we run the algorithm with parameter $2$; all the claimed bounds continue to hold up to constant factors. Hence assume $2\le\alpha\le\Delta$.

We instantiate Algorithm~\ref{alg:mds-energy} with $\stagebase=\alpha$, $\iterbase_1=2$, and $\iterbase_2=\max\{2,\log\Delta/\log\log n\}$.
Let $L_\alpha=\lceil\log_\alpha\Delta\rceil$, and let $m_2=L_\alpha-i^\star+1$ denote the exact number of Phase~2 stages. By the definition of $i^\star$, $m_2=O(1+\log_\alpha\log n)$.

\noindent
\textit{Rounds.}
Phase~1 and Phase~2 together contain exactly $L_\alpha$ stages.
A Phase~1 stage has $\lceil\log\Delta\rceil$ iterations, while a Phase~2 stage has $\lceil\log_{\iterbase_2}\Delta\rceil\le\lceil\log\Delta\rceil$ iterations. Each iteration takes $O(1)$ rounds. 
Since $\alpha\le\Delta$, we have $\log_\alpha\Delta\ge1$, and hence $L_\alpha=O(\log_\alpha\Delta)$. Therefore the total round complexity is $O(\log\Delta\cdot\log_\alpha\Delta)$.

\noindent
\textit{Awake rounds.}
By Lemma~\ref{lem:phase1-awake}, Phase~1 requires $O(\log_\alpha\Delta+\log\log\Delta)$ awake rounds per node.
Phase~2 requires $O((1+\log_\alpha\log n)\log_{\iterbase_2}\Delta)$ awake rounds.

Consider first the following regime: 
\[\tag{A}\log\log\Delta\le2\log\log\log n.\]
Then $\log\Delta\le(\log\log n)^2$. Since $\iterbase_2\ge2$, we have $\log_{\iterbase_2}\Delta=O(\log\Delta) =O((\log\log n)^2)$. Thus Phase~2 requires $O((\log\log n)^2+(\log\log n)^3/\log\alpha)$ awake rounds.
The Phase~1 bound is absorbed by this quantity, so the total awake complexity in regime~(A) is $O((\log\log n)^2+(\log\log n)^3/\log\alpha)$.

Now consider the complementary regime:
\[\tag{B}\log\log\Delta>2\log\log\log n.\]
For sufficiently large $n$, this implies $\log\Delta/\log\log n>2$, and hence $\iterbase_2=\log\Delta/\log\log n$. 
Moreover, $\log\iterbase_2 =\log\log\Delta-\log\log\log n >\frac12\log\log\Delta$.
Therefore $\log_{\iterbase_2}\Delta =O(\log\Delta/\log\log\Delta)$, and Phase~2 requires $O(\log\Delta/\log\log\Delta+ (\log\Delta/\log\alpha)(\log\log n/\log\log\Delta))$ awake rounds. 
Combining this with the Phase~1 bound gives $O(\log\Delta/\log\log\Delta+ \log_\alpha\Delta(1+\log\log n/\log\log\Delta))$ awake rounds, where $\log\log\Delta$ is absorbed by $\log\Delta/\log\log\Delta$.

Both regimes give an $O(\log n)$ awake bound. This is immediate in regime~(A), since any fixed polynomial in $\log\log n$ is $o(\log n)$. 
In regime~(B), using $\log\Delta\le\log n$ and the monotonicity of $x/\log x$, we have $(\log\Delta/\log\log\Delta)\log\log n=O(\log n)$.
Since $\log\alpha\ge1$, every term in the regime~(B) bound is therefore $O(\log n)$.

If $\Delta\ge2^{\log^\varepsilon n}$ for a constant $\varepsilon\in(0,1)$, then $\log\log\Delta\ge\varepsilon\log\log n$, so regime~(B) applies for sufficiently large $n$ and $\log\log n/\log\log\Delta=O(1)$. 
The awake complexity then simplifies to $O(\log_\alpha\Delta+\log\Delta/\log\log\Delta)$.

\noindent
\textit{Approximation.}
By Lemma~\ref{lem:approx-elig}, Phase~1 contributes $O(\alpha\iterbase_1L_\alpha)=O(\alpha\log\Delta)$ to the expected approximation factor.

For Phase~2, if $\iterbase_2=2$, then $\iterbase_2m_2\le2L_\alpha=O(\log\Delta)$. Otherwise, $\iterbase_2=\log\Delta/\log\log n$, and $m_2=O(1+\log_\alpha\log n)$ gives $\iterbase_2m_2 =O(\log\Delta/\log\log n+\log\Delta/\log\alpha) =O(\log\Delta)$.
Thus Phase~2 also contributes $O(\alpha\log\Delta)$, and the expected approximation factor is $O(\alpha\log\Delta)$.
\end{proof}

\section{Conclusions}
\label{sec:conclusion}
We show that the awake complexity of computing an approximate MDS can be much smaller than the round complexity. Specifically, our techniques allow us to obtain an awake complexity that is almost an $O(\log \Delta)$ factor smaller than the round complexity. We are able to achieve this without any additional cost in round complexity or approximation.

Several questions remain open. Perhaps the most basic question raised by our work is whether we can achieve an $O(\log \Delta)$-approximate MDS with $o(\log \Delta)$ awake complexity. 
Keeping the overall structure of our algorithm, ensuring an $o(\log \Delta)$ awake complexity would imply
that nodes don't wake up every stage and in fact sleep through many stages.
One natural idea along these lines is the following: suppose a node $v$ is awake in stage $i$, but has not been added to the dominating set in stage $i$. Node $v$ can use its residual degree at the end of stage $i$ to pick a target stage to wake up in. Node $v$ can then sleep until the start of the target stage. 
For example, if a node has very few undominated neighbors, it can potentially sleep for many stages. 
Unfortunately, adding this type of sleep schedule does not improve the awake complexity in the worst case.
For the counterexample in Appendix~\ref{app:worst-candidate}, a large fraction of the nodes wake up in many stages. This happens because a node may determine its target wake up stage based on its current residual degree, but by the time it has woken up, its residual degree has fallen further, making this 
a ``wasted'' wake up. In fact, in our counterexample, even the average awake complexity is $\Theta(\log \Delta)$. 
Designing an $O(\log \Delta)$-approximate MDS in $o(\log \Delta)$ awake rounds seems to be a challenging open problem requiring fundamentally new ideas. 

It is also natural to ask whether allowing a larger round complexity can
lead to lower awake complexity. Our current techniques do not give such a
tradeoff, and we do not know whether an $O(\log\Delta)$-approximation with
$o(\log\Delta)$ awake complexity can be obtained even if polynomially many
rounds are allowed.

Another open question is whether we can show lower bounds on the awake complexity of computing an approximate MDS? While the awake complexities we achieve are closer to the KMW round complexity lower bounds \cite{kuhn2016local} for computing $\log(\Delta)$-approximate MDS, it is not clear whether they are optimal. The actual lower bound on awake complexity of computing an $O(\log \Delta)$-approximate MDS can be much smaller than the KMW lower bound. It would be interesting to see if one can design approximate MDS algorithms that have an awake complexity much lower than the KMW round complexity lower bounds.

\paragraph*{Acknowledgements.}
Shreyas Pai is grateful to the A. Raghunathan Center for Theoretical CS for support.

\bibliography{refs}
\appendix

\section{Chernoff bounds}
\label{app:chernoff}

We use the following standard forms of the Chernoff bound  (Chapter~4 of Mitzenmacher and Upfal~\cite{mitzenmacher2005probability}).

\begin{lemma}[Chernoff bounds]
\label{lem:chernoff}
Let $X_1, X_2, \ldots, X_N$ be independent random variables taking values in $\{0,1\}$, let $X = \sum_{j=1}^{N} X_j$, and let $\mu = \mathbb{E}[X]$.
\begin{enumerate}
\item For every $\delta \in (0,1)$, $\mathrm{Prob}[X \notin (1\pm\delta)\mu] \le 2\exp\left(-\delta^2\mu/3\right)$.
\item For every $\mu_H \ge \mu$ and every $\delta \in (0,1]$, $\mathrm{Prob}[X \ge (1+\delta)\mu_H] \le \exp\left(-\delta^2\mu_H/3\right)$.
\end{enumerate}
\end{lemma}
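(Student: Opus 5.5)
We would prove both parts by the standard moment-generating-function argument. Let $p_j=\mathrm{Prob}[X_j=1]$, so $\mu=\sum_j p_j$. For any $t>0$, independence and $1+x\le e^x$ give
\[
\mathbb{E}[e^{tX}]=\prod_{j}\bigl(1+p_j(e^t-1)\bigr)\le \exp\bigl((e^t-1)\mu\bigr).
\]
Markov's inequality then yields, for $\delta>0$,
\[
\mathrm{Prob}[X\ge(1+\delta)\mu]\le \exp\bigl((e^t-1)\mu-t(1+\delta)\mu\bigr).
\]
Choosing $t=\ln(1+\delta)$ gives the bound $\bigl(e^\delta/(1+\delta)^{1+\delta}\bigr)^\mu$. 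Symmetrically, applying Markov to $e^{-tX}$ with $t=-\ln(1-\delta)$ gives $\mathrm{Prob}[X\le(1-\delta)\mu]\le\bigl(e^{-\delta}/(1-\delta)^{1-\delta}\bigr)^\mu$ for $\delta\in(0,1)$.

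Next we reduce both bounds to the quadratic forms by two elementary calculus inequalities, valid for $\delta\in(0,1]$ and $\delta\in(0,1)$ respectively:
\[
f(\delta)=\delta-(1+\delta)\ln(1+\delta)+\delta^2/3\le 0,
\qquad
g(\delta)=-\delta-(1-\delta)\ln(1-\delta)+\delta^2/2\le 0.
\]
For $f$, we have $f(0)=0$, $f'(\delta)=-\ln(1+\delta)+2\delta/3$ and $f''(\delta)=-1/(1+\delta)+2/3$. Thus $f'$ first decreases and then increases on $[0,1]$, with $f'(0)=0$ and $f'(1)=2/3-\ln 2<0$. Hence $f'\le 0$ on $[0,1]$, so $f\le 0$ there. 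For $g$, we have $g(0)=0$, $g'(\delta)=\ln(1-\delta)+\delta\le 0$, so $g\le 0$. These give an upper tail of at most $e^{-\delta^2\mu/3}$ and a lower tail of at most $e^{-\delta^2\mu/2}\le e^{-\delta^2\mu/3}$. A union bound over the two tails gives part~1.

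For part~2 with $\mu_H\ge\mu$, we would avoid re-deriving the bound. In the MGF computation we replace $\mu$ by $\mu_H$ in the upper bound $\exp((e^t-1)\mu)\le\exp((e^t-1)\mu_H)$, which is valid since $e^t-1>0$. We then apply Markov at threshold $(1+\delta)\mu_H$ with the same choice $t=\ln(1+\delta)$. This yields $\bigl(e^\delta/(1+\delta)^{1+\delta}\bigr)^{\mu_H}\le e^{-\delta^2\mu_H/3}$ by the inequality for $f$, now used at $\delta=1$ as well.

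The only step needing care is the calculus inequality for $f$ on the closed interval up to $\delta=1$, which is why part~2 allows $\delta\in(0,1]$. Everything else is routine.
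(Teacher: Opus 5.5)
Your proposal is correct and matches the paper's approach: the paper does not prove this lemma itself but cites Chapter~4 of Mitzenmacher and Upfal, whose proof is exactly your argument. That argument is the moment-generating-function bound with Markov's inequality, the two calculus inequalities for $f$ and $g$, and the substitution of $\mu_H \ge \mu$ into $\exp\bigl((e^t-1)\mu\bigr)$ for the one-sided version.
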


Part (1) is the two-sided bound we use when we know the expectation.
Part (2) is the one-sided bound we use when we only have an upper bound $\mu_H$ on the expectation, which is the situation in the case $|B| < \gamma T_i/8$ in the proof of Lemma~\ref{lem:est-decision-base}.
In that proof we apply these bounds to sums whose expectation, or whose upper bound $\mu_H$, is at least a constant fraction, depending only on $\gamma$, of $C\log n$, and we want a failure probability of at most $n^{-c}$ for a constant $c$ of our choosing.
The following consequence of Lemma~\ref{lem:chernoff} is the form we use; the constant $C$ of Lemma~\ref{lem:est-decision-base} is chosen large enough in terms of $\gamma$ and $c_0$ to absorb both these fractions and the constant factors in the two bounds above.

\begin{corollary}
\label{cor:chernoff-whp}
Fix constants $\delta \in (0,1)$ and $c > 0$, and let $X$ be as in Lemma~\ref{lem:chernoff}. There is a constant $C = C(\delta,c)$ such that the following hold for every $n \ge 2$.
\begin{enumerate}
\item If $\mathbb{E}[X] \ge C\log n$, then $\mathrm{Prob}[X \in (1\pm\delta)\mathbb{E}[X]] \ge 1 - n^{-c}$.
\item If $\mu_H \ge \mathbb{E}[X]$ and $\mu_H \ge C\log n$, then $\mathrm{Prob}[X \le 2\mu_H] \ge 1 - n^{-c}$.
\end{enumerate}
\end{corollary}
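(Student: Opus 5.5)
Both parts follow by plugging the hypothesis $\mathbb{E}[X]\ge C\log n$ (respectively $\mu_H\ge C\log n$) into the exponent of the corresponding part of Lemma~\ref{lem:chernoff}. We then choose $C$ large enough that the resulting bound is at most $n^{-c}$. Throughout we use that $\log n\ge \ln n$ whether $\log$ means base $2$ or base $e$. Hence $\exp(-\kappa\log n)\le \exp(-\kappa\ln n)=n^{-\kappa}$ for every $\kappa>0$.

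For part (1), apply Lemma~\ref{lem:chernoff}(1) with $\mu=\mathbb{E}[X]\ge C\log n$. This gives
\[
\mathrm{Prob}[X\notin(1\pm\delta)\mathbb{E}[X]]\le 2\exp(-\delta^2 C\log n/3)\le 2n^{-\delta^2C/3}.
\]
Now take $C\ge 3(c+1)/\delta^2$. The bound becomes at most $2n^{-(c+1)}$, and this is at most $n^{-c}$ because $n\ge 2$.

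For part (2), apply Lemma~\ref{lem:chernoff}(2) with the given $\mu_H\ge\mathbb{E}[X]$ and with $\delta=1$. This gives
\[
\mathrm{Prob}[X\ge 2\mu_H]\le \exp(-\mu_H/3)\le \exp(-C\log n/3)\le n^{-C/3}.
\]
Taking $C\ge 3c$ makes this at most $n^{-c}$. The event $\{X>2\mu_H\}$ is contained in $\{X\ge 2\mu_H\}$. So the complement, $X\le 2\mu_H$, holds with probability at least $1-n^{-c}$.

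Finally, set $C(\delta,c)=\max\{3(c+1)/\delta^2,\,3c\}$, which makes both parts hold simultaneously. None of the steps is a real obstacle. The only points needing care are the constant factor $2$ in the two-sided bound, which is absorbed by the extra $+1$ in the exponent using $n\ge2$, and the base of the logarithm, which is handled by $\log n\ge\ln n$.
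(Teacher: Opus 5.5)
Your proof is correct and takes the same route as the paper: substitute the lower bound on $\mathbb{E}[X]$ (resp.\ $\mu_H$) into the exponent of Lemma~\ref{lem:chernoff}(1) (resp.\ Lemma~\ref{lem:chernoff}(2) with $\delta=1$), then choose $C$ large. Your version is just more explicit, with $C=\max\{3(c+1)/\delta^2,3c\}$, the factor $2$ absorbed using $n\ge 2$, and the logarithm base handled via $\log n\ge\ln n$; the paper leaves these details implicit.
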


\begin{proof}
For (1), Lemma~\ref{lem:chernoff}(1) bounds the failure probability by $2\exp(-\delta^2 C\log n/3)$, which is at most $n^{-c}$ once $C$ is a large enough constant in terms of $\delta$ and $c$.
For (2), Lemma~\ref{lem:chernoff}(2) applied with $\delta = 1$ bounds the failure probability by $\exp(-\mu_H/3) \le \exp(-C\log n/3)$, which is again at most $n^{-c}$ for a large enough constant $C$.
\end{proof}
\section{Counterexample}
\label{app:worst-candidate}
Here we show an input graph $G$ where a constant fraction of the nodes will need to wake up in all but the first stage of the \textsc{BaseMDS} algorithm. We consider an awake-efficient implementation of the \textsc{BaseMDS} algorithm where each node will decide to wake up in the stage where it becomes a candidate according to its current residual degree. We ignore the energy cost to accurately estimate the residual degree. Assume for the sake of clarity that $\Delta=2^{L+1}$ where $L$ is the number of stages in \textsc{BaseMDS}. This means that for all $0 \le i \le L$, the thresholds for becoming a candidate in stage $i$, $T_i=\Delta/2^i$ are integers (and $T_{L}=2$). We construct a graph $G$ of maximum degree at most $\Delta$ in which every vertex in a core set \vcore{} is a candidate in at least $L$ stages of the $(\stagebase,\iterbase)$-MDS algorithm.  Let $|\vcore{}| = \Delta/2 - 2$.

For each $i=2,\dots,L$, create two blocks $B_i, Q_i$ each having $T_i$ vertices and a controller vertex $c_i$. Connect every vertex in \vcore{} to every vertex in $B_i$, and connect $c_i$ to every vertex in $B_i \cup Q_i$. We add another controller vertex $c_1$ that is adjacent to all nodes in \vcore{} and has additional $\Delta/2$ neighbors $Q_1$. There are no other edges in $G$. The number of vertices in $G$ is $1 + |\vcore| + |Q_1| + \sum_{i=2}^{L}{(1 + |B_i| + |Q_i|)} = \Delta - 1 + \sum_{i=2}^{L}{ (1 + \Delta/2^{i-1})} \le 4\Delta$. The following diagram illustrates one block of the construction.

\begin{figure}
    \centering
    \includegraphics[width=0.9\textwidth]{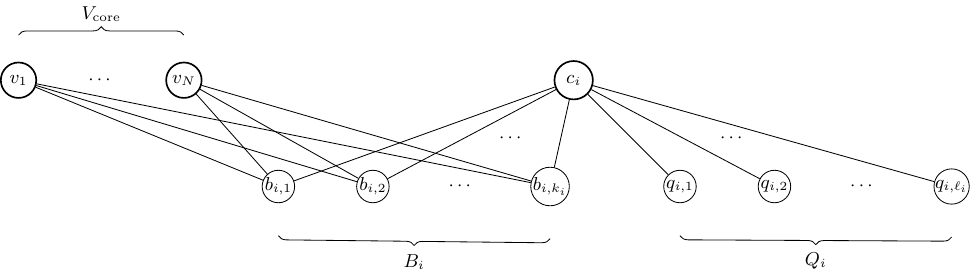}
    \caption{One part of our counterexample graph $G$. Each core vertex is adjacent to all vertices in $B_i$, and the controller $c_i$ is adjacent to all vertices in $B_i\cup Q_i$. This figure only describes the construction for $i>1$.}
    \label{fig:counterexample}
\end{figure}

We show that the maximum degree in $G$ is indeed at most $\Delta$. Every $v \in V_{\mathrm{core}}$ has degree
$1 + \sum_{i=2}^{L}|B_i|
= 1 + \sum_{i=2}^{L} \Delta/2^i
= \Delta/2 - 1$. Controller $c_1$  has degree $|V_{\mathrm{core}}| + |Q_1| = \Delta - 2$ and every controller $c_i$ for $i \ge 2$ has degree $|B_i|+|Q_i|= 2 T_i \le \Delta/2^{i-1}$. Each vertex in $B_i$ has degree $|V_{\mathrm{core}}| + 1 = \Delta/2 - 1$, and each vertex in $Q_i$ has degree $1$.

The degrees of each node are set up in such a way that only $c_i$ is a candidate in stage $i$, vertices in $B_i$ are candidates in the second stage, vertices in \vcore{} are candidates in the third stage, and vertices in $Q_i$ may be candidates only in the last stage, so all these vertices will sleep until these stages. Since $c_1$ is the only candidate in stage one, it will remain a candidate until it joins the dominating set in stage one, which is guaranteed to happen, since in the last iteration of each stage, the candidates join with probability 1. Once $c_1$ joins the dominating set, the vertices in \vcore{} are dominated and their residual degree is decreased by $2$. Whereas the residual degree of nodes in $B_i$ reduces to at most $2$ as most of their neighbors in \vcore{} are dominated. Therefore, nodes in $B_i$ wake up in the second stage, notice that their residual degree has fallen, and sleep until the last stage. This means that, $c_2$ is the only candidate in the second stage, and similarly will join the dominating set in some iteration of the second stage. Now, when nodes in \vcore{} wake up in the third stage, they find their residual degree has decreased to $\sum_{i=3}^{L}|B_i|= \sum_{i=3}^{L}(T_i)= (\Delta/8) (1 - 1/2^{L-2})$. This makes the \vcore{} nodes ineligible to be candidates in the third stage, but they are eligible to be candidates in the fourth stage. Therefore, all nodes in \vcore{} sleep until the fourth stage. This means that, $c_3$ is the only candidate in the third stage, and it will join the dominating set. This process will repeat in all future stages.

To summarize, in stage $i$, vertices in $\vcore{}$ and $c_i$ will wake up, the \vcore{} vertices will realize that their residual degree has fallen below the threshold for stage $i$, but it is above the threshold for stage $i+1$. Therefore, vertices in \vcore{} wake up in stage $i$ and decide to sleep until stage $i+1$. This allows $c_i$ to join the dominating set in stage $i$, which prevents vertices in \vcore{} from becoming candidates in stage $i+1$, which they will only figure out when they wake up in stage $i+1$.

Therefore, all vertices in \vcore{} wake up at least once in almost all the stages, which means that the (worst-case and average) awake complexity is $\Omega(L)$.

\end{document}